\documentclass{uai2023} % for initial submission

%% Choose your variant of English; be consistent
\usepackage[american]{babel}
% \usepackage[british]{babel}

%% Some suggested packages, as needed:
\usepackage{natbib} % has a nice set of citation styles and commands
    \bibliographystyle{plainnat}
    
\usepackage{mathtools} % amsmath with fixes and additions
\usepackage{booktabs} % commands to create good-looking tables
\usepackage{tikz} % nice language for creating drawings and diagrams

\usepackage{amsthm}
\usepackage{multirow}
\usepackage{subcaption}
\newtheorem{definition}{Definition}
\newtheorem{theorem}{Theorem}
\newtheorem{lemma}{Lemma}
\newtheorem{proposition}{Proposition}
\newtheorem{corollary}{Corollary}
\newtheorem{assumption}{Assumption}

%% Provided macros
% \smaller: Because the class footnote size is essentially LaTeX's \small,
%           redefining \footnotesize, we provide the original \footnotesize
%           using this macro.
%           (Use only sparingly, e.g., in drawings, as it is quite small.)

%% Self-defined macros
 % just an example

\title{A Truthful Referral Auction Over Networks}

\author[]{Youjia Zhang}
\author[]{Pingzhong Tang}
\affil[]{
    % Affiliations
    Institute for Interdisciplinary Information Sciences\\ 
    
    Tsinghua University\\
    
    zhangyou19@mails.tsinghua.edu.cn, kenshinping@gmail.com 
}

\begin{document}
\maketitle

\begin{abstract}
  This paper studies a mechanism design problem over a network, where agents can only participate by referrals. The Bulow-Klemberer theorem proposes that expanding the number of participants is a more effective approach to increase revenue than modifying the auction format. However, agents lack the motivation to invite others because doing so intensifies competition among them. On the other hand, misreporting social networks is also a common problem that can reduce revenue. Examples of misreporting include Sybil attacks (an agent pretending to be multiple bidders) and coalition groups (multiple agents pretending to be an agent). To address these challenges, we introduce a novel mechanism called the Truthful Referral Diffusion Mechanism (TRDM). TRDM incentivizes agents to report their social networks truthfully, and some of them are rewarded by the seller for improving revenue. In spite of the fact that some agents overbid in TRDM, the revenue is fixed, and it is higher than the revenue of any mechanism without referrals. TRDM is budget-balanced (non-negative revenue) and generates an efficient outcome (maximized social welfare), making it attractive for both the seller and the buyers as it improves revenue and reward.
\end{abstract}

\section{Introduction}
\label{sec: intro}

Mechanism design is an interdisciplinary field that integrates insights from economics, game theory, and computer science to design mechanisms that can achieve desirable outcomes in a strategic environment where all agents act rationally. Auction is a popular application of mechanism design that determines the fair market value of goods and services. Different formats of auctions are designed to achieve specific objectives such as individual rationality, incentive compatibility, revenue maximization, and efficiency. Auction design is a popular area of study that draws a lot of attention and research from economists and other experts. The 2020 Nobel Prize for Economics was awarded to Paul R. Milgrom and Robert B. Wilson “for improvements to auction theory and inventions of new auction formats.” \cite{nobel2020}

In traditional mechanism design, the set of agents is predetermined and assumed to be independent. However, with the advancement of technology, the prevalence and impact of social media have significantly changed the way people communicate and access information. Nowadays, individuals interact with one another more frequently and efficiently than ever before. As a result, many companies are reevaluating their marketing strategies and shifting their focus toward utilizing social networks. Instead of traditional advertising, they offer incentives for customers to refer their friends. As an example, Dropbox has increased its total user base to more than 4 million in less than two years by implementing a referral program \cite{dropbox}. 

Hence, it is natural to develop auctions over social networks as it benefits all agents, including the seller. Particularly, the Bulow-Klemperer theorem argues a Vickrey auction with an additional bidder can generate more revenue than Myerson's optimal auction \cite{bulow1996auctions}. In real-world scenarios, sellers on eBay pay to extend the auction duration and advertise the sale information on the site in order to attract more participants. However, these actions may not bring valuable buyers and increase revenue, making these investments ineffective. 

Instead of designing complicated auction formats, we propose a mechanism over social networks, where agents join the auction through referrals. In traditional auctions such as the First Price auction (FPA) and the Second Price auction (SPA), agents lack incentives to invite others, as new agents may lead existing agents to lose the item or increase the payment. Therefore, the mechanism should reward agents who contribute to the growth of revenue. Recent literature has referred to mechanism design over social networks as \textit{diffusion mechanism} and has demonstrated it to be a promising area of research \cite{li2017mechanism,li2019diffusion,zhang2020incentivize,zhang2020redistribution,li2021incentive}.

Furthermore, agents may benefit from misreporting their social networks. For example, an agent creates multiple fake identities (Sybil attacks) to pay less for the item or receive more rewards from the seller. On the other hand, in network settings, it is easier for agents to make a deal with others and pretend to be a single agent (collusion group) in order to improve their utility. Both Sybil attacks and collusion group are well-defined in \cite{yokoo2001robust, nath2012mechanism}, respectively, but have not been extensively studied in the field of diffusion mechanism.

This paper proposes a novel mechanism called the Truthful Referral Diffusion Mechanism (TRDM) to overcome the aforementioned challenges. Our mechanism not only ensures all agents report their networks truthfully but also guarantees an efficient allocation and generates higher revenue than auctions without referrals. 

The remainder of the paper is organized as follows. Section \ref{sec: literature review} reviews the relevant literature. Section \ref{sec: preliminaries and model} outlines the basic concepts and background information of the model, including the impossibility results. In Section \ref{sec: mechanism}, we propose our novel mechanism in detail and show its properties. In Section \ref{sec: comparison}, a numerical comparison of our mechanism with other diffusion mechanisms is presented. Finally, we provide concluding comments and discuss potential future research in Section \ref{sec: conclusion}.

\section{Literature Review}
\label{sec: literature review}

There is an increasing amount of literature on mechanism design in network settings. For more information, we recommend checking out the recent work by \cite{zhao2021mechanism} and the references cited within it.

The seminal paper \cite{li2017mechanism} established that extending the VCG mechanism \cite{vickrey1961counterspeculation,clarke1971multipart,groves1973incentives} to social networks results in budget deficits. They were the first to propose an auction on social networks, which is called the information diffusion mechanism (IDM). IDM incentivizes participants to invite others and ensures non-negative revenue. Later, \cite{zhao2018selling} extended the mechanism for selling multiple items on a social network. Following the seminal work of \cite{li2017mechanism}, \cite{li2019diffusion} further identified a class of diffusion mechanisms under social networks named Critical Diffusion Mechanisms (CDM), of which IDM is a specific case that results in the lowest revenue. Recently, \cite{li2021incentive} identified a condition for all dominant-strategy incentive-compatible (DSIC) diffusion auctions by decoupling the payment function.

Inspired by the redistribution mechanism in \cite{cavallo2006optimal}, \cite{zhang2020incentivize} proposed the Fair Diffusion Mechanism (FDM) that compensates more participants fairly without reducing the seller's revenue. Subsequently, \cite{zhang2020redistribution} extended the FDM to a more general setting, resulting in the Network-based Redistribution Mechanism(NRM) that satisfies both IC and IR constraints and is budget feasible.

The works mentioned above have not been explored with regards to Sybil attacks and collusion groups. Other research fields such as \cite{emek2011mechanisms, babaioff2012bitcoin, chen2013sybil, nath2012mechanism, zhang2021sybil, zhang2023collusionproof} have demonstrated the potential of Sybil-proof and collusion-proof mechanisms, making them promising areas of research for further investigation.

\section{Preliminaries and Model}
\label{sec: preliminaries and model}

Considering an auction with a seller $s$ and $n$ agents. Let $N$ be the set of all agents, including the seller. 

For each agent $i \in N$, he has a type $\theta_{i}=(r_{i}, v_{i})$, where $r_{i} \subseteq \{ N \setminus s\}$ denotes the set of children (friends) of agent $i$ (whom $i$ can directly invite), and $v_{i}$ is agent $i$'s valuation for the item. Let $\theta=(\theta_{1},...,\theta_{n})=(\theta_{i}, \theta_{-i})$ be the type profile of all agents, where $\theta_{-i}$ is the type profile of all agents excluding agent $i$.

Agents are asked to report their types to the seller. We denote $\theta_{i}^{\prime}=(r_{i}^{\prime}, v_{i}^{\prime})$ as the report type of agent $i$, where $v_{i}^{\prime} \in R^{+}$ is non-negative. In particular, it is impossible to share the information of the mechanism to a non-existing child, so $r_{i}^{\prime} \subseteq r_{i}$. Similarly, $\theta^{\prime}=(\theta_{1}^{\prime}, ..., \theta_{n}^{\prime})=(\theta_{i}^{\prime}, \theta_{-i}^{\prime})$ is the reported profile of all agents. 

Given the reported profile $\theta^{\prime}$, we can generate a directed graph $G(\theta^{\prime})=(V(\theta^{\prime}), E(\theta^{\prime}))$, where $V(\theta^{\prime}) \subseteq N$ is the set of agents who join the auction and an edge $e(i, j) \in E(\theta^{\prime})$ represents the connection between agents $i$ and $j$ (e.g., $j \in r_{i}$). Agent $i$ is \textit{connected} if there exists a path $s \rightarrow ... \rightarrow i$ in the directed graph $G(\theta^{\prime})$.

\begin{definition}
\label{def: mechanism}
    A diffusion mechanism $M$ is a pair of allocation rule and payment rule, $M=\langle x,p \rangle$, an allocation rule $x=(x_{1}, ..., x_{n})$ and a payment rule $p=(p_{1},...,p_{n})$, where $x_{i}: \Theta \rightarrow \{0,1\}$ and $p_{i}: \Theta \rightarrow \mathbf{R}$.
\end{definition}   

Given the reported profiles $\theta^{\prime} \in \Theta$, agent $i$ is allocated the item if $x_{i}(\theta^{\prime})=1$ (we refer such an agent as the \textbf{winner}), otherwise $x_{i}(\theta^{\prime})=0$. In the meanwhile, $p_{i}(\theta^{\prime}) > 0$ indicates that agent $i$ pays to the seller, and if $p_{i}(\theta^{\prime})<0$, agent $i$ receives a \textbf{reward} from the seller. Therefore, the utility of an agent with $\theta_{i}$ reporting $\theta_{i}^{\prime}$ can be written as

\begin{equation}
\label{eq: utility}
    \begin{aligned}
        u_{i}(\theta^{\prime})=x_{i}(\theta^{\prime}) \cdot v_{i} - p_{i}(\theta^{\prime}).
    \end{aligned}
\end{equation}

In traditional mechanism design, two fundamental properties are considered: \textbf{individual rationality (IR)} and \textbf{incentive compatibility (IC)}. The former ensures that the mechanism is rational for all agents to participate in, while the latter ensures that agents have incentives to report their types truthfully. In social network settings, the type space is expanded to include the act of inviting. Hence, we also extend the definitions of IC and IR to include it.

\begin{definition}
\label{def: IR}

    The diffusion mechanism $M=\langle x,p \rangle$ is \textbf{individually rational (IR)} if for all $i \in N$, it holds that $u_{i}(\theta) \geq 0$.

\end{definition}

\begin{definition}
\label{def: IC}
    The diffusion mechanism $M=\langle x,p \rangle$ is \textbf{incentive compatible (IC)} if $u_{i}(\theta_{i}, \theta_{-i})\geq u_{i}(\theta_{i}^{\prime}, \theta_{-i}^{*})$, for all $i \in N$ and for all $j \in V(\theta_{i}^{\prime}, \theta_{-i}^{*})$ and $j \neq i$, $\theta_{j}^{*}=\theta_{j}$.
    
    Moreover, mechanism $M$ is \textbf{partially incentive compatible} if satisfying one of the following conditions
    \begin{itemize}
        \item (\textbf{truthful bid}) $u_{i}((v_{i}, r_{i}^{\prime}), \theta_{-i}^{*})  \geq u_{i}((v_{i}^{\prime}, r_{i}^{\prime}), \theta_{-i}^{*})$.
        \item (\textbf{truthful referral}) $u_{i}((v_{i}^{\prime}, r_{i}), \theta_{-i}) \geq u_{i}((v_{i}^{\prime}, r_{i}^{\prime}), \theta_{-i}^{*})$.
    \end{itemize}
\end{definition}

Note that when agent $i$ misreports $r_{i}^{\prime}$, some of his children and descendants cannot join the auction (not in the directed graph). Hence, we denote $\theta_{-i}^{*}$ as the \textbf{truthful} reported profile of all agents $j \in \{V(\theta_{i}^{\prime}, \theta_{-i}^{*}) \setminus i\}$.

As we mentioned, Sybil attacks in diffusion mechanisms have not been well-studied. Agents may manipulate multiple fake identities to influence the outcome of a mechanism. The following is a formal definition of Sybil-proof diffusion mechanism:

\begin{definition}
\label{def: sybil}
    The diffusion mechanism $M=\langle x,p \rangle$ is \textbf{Sybil-proof (SP)} if for all agents $i \in N$, 
    \begin{equation*}
        \begin{aligned}
            u_{i}((v_{i}^{\prime}, r_{i}^{\prime}), \theta_{-i}) \geq \sum_{j \in F_{i}} u_{j}((v_{j}^{*}, r_{j}^{*}), \theta_{-F_{i}}),
        \end{aligned}
    \end{equation*}
    where $v_{i}^{\prime}, v_{i}^{*} \in \mathbf{R}_{+}$, $r_{i}^{\prime} \subseteq r_{i}$, $r_{i}^{*} \subseteq r_{i} \cup F_{i}$, and $F_{i}$ are fake accounts owned by agent $i$ including himself. 
\end{definition}

In network settings, it is easier for agents to communicate with their parents and children to form a coalition group (multiple agents pretending to be a single agent) in order to improve their utility.

\begin{definition}
\label{def: collusion}
    The diffusion mechanism $M=\langle x,p \rangle$ is \textbf{collusion-proof (CP)} if for all agents $i \in N$, there is \textbf{no} collusion group $C_{i}$, such that
    \begin{equation*}
        \begin{aligned}
            \sum_{j \in C_{i}} u_{j}((v_{j}^{*}, r_{j}^{*}), \theta_{-C_{i}}) \geq \sum_{j \in C_{i}} u_{j}((v_{j}^{\prime}, r_{j}^{\prime}), \theta_{-j}),
        \end{aligned}
    \end{equation*}
    where $v_{i}^{\prime}, v_{i}^{*} \in \mathbf{R}_{+}$, $r_{i}^{\prime} \subseteq r_{i}$, $r_{i}^{*} \subseteq r_{i}^{\prime}$, and $C_{i}$ is the collusion group rooted at agent $i$. 
\end{definition}

It is important to note that under a Sybil-proof (or collusion-proof) mechanism, agents may also benefit from propagating the auction to some of their children rather than all (non-truthful referrals).

\begin{corollary}
\label{coro: sybil-collusion-truthful}
    A truthful referral mechanism is Sybil-proof and collusion-proof; however, a Sybil-proof (or collusion-proof) mechanism may not be referral truthfulness.
\end{corollary}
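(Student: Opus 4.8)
The plan is to prove the two halves of the corollary by different routes. For the first half — a referral‑truthful mechanism is both Sybil‑proof and collusion‑proof — I would realise every Sybil manipulation and every collusion as an ordinary referral deviation of the manipulating party, and then invoke the truthful‑referral inequality. Fix an agent $i$ and a set $F_i\ni i$ of identities it controls. Since the fake identities in $F_i\setminus\{i\}$ have no genuine neighbours, the only reported edges leaving the cluster $F_i$ point to true children of $i$; let $r_i'\subseteq r_i$ be those children that the cluster keeps connected. Contracting $F_i$ to the single node $i$ turns the reported graph into exactly the one $i$ induces by reporting $(v_i',r_i')$ while everyone else reports the truthful (possibly partly disconnected) profile $\theta_{-i}^{*}$; and because only $i$ ever consumes the item and the sole money entering the cluster consists of the payments $p_j$, the cluster's aggregate payoff $\sum_{j\in F_i}u_j$ equals $u_i\big((v_i',r_i'),\theta_{-i}^{*}\big)$ for a suitable reported bid $v_i'$. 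Truthful referral — together with truthful bid when $v_i'\neq v_i$ — then gives $u_i(\theta_i,\theta_{-i})\ge u_i\big((v_i',r_i'),\theta_{-i}^{*}\big)=\sum_{j\in F_i}u_j$, which is Definition~\ref{def: sybil}. The collusion case is entirely parallel: a group $C_i$ rooted at $i$ that impersonates one agent merges its members into a single node sitting at $i$'s position, whose outgoing edges form a subset $r_i^{*}\subseteq r_i'$ of the group's collective external children and whose payoff is the group's aggregate payoff; truthful referral applied at $i$ then contradicts the inequality defining a profitable collusion in Definition~\ref{def: collusion}.

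For the second half it suffices to exhibit a single mechanism that is Sybil‑proof but not referral‑truthful. I would take the efficient no‑reward diffusion auction: allocate the item to the highest‑valued connected bidder, charge that bidder the second‑highest connected bid, and give no one a referral reward. Truthful bidding is dominant, and creating fake identities cannot beat the best single report — a splitting agent's winning fake still pays the highest competing genuine bid, and the cluster can only shrink its own connectivity, which a single node does equally well — so the mechanism is Sybil‑proof. Yet an agent whose descendant holds the top valuation strictly gains by concealing that descendant, thereby winning the item (and paying only the next‑highest bid) himself; hence truthful referral fails. This is precisely the phenomenon anticipated in the remark immediately before the corollary.

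The crux of the argument is the emulation step in the first half: one must check that contracting a fake cluster (respectively merging a colluding group) really does produce a reported network of the exact form to which truthful referral applies — a single node reporting some $r_i'\subseteq r_i$ over an otherwise truthful profile — and that no arrangement of internal edges or internal rewards among the fakes can manufacture payoff unobtainable by one node. Handling the bid quantifiers correctly — the $v_j^{*}$ that the fakes or colluders may report — is the remaining subtlety, and it is the reason truthful bid may be needed in tandem with truthful referral rather than truthful referral alone.
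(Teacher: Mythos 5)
The paper states this corollary without proof, so there is nothing to compare against; judging your argument on its own terms, the second half is fine but the first half has a genuine gap. Your counterexample for the second half (a connected second-price auction with no referral rewards) does the job: single-item second-price is false-name-proof, yet an agent who hides a higher-valued descendant wins the item himself, so referral truthfulness fails. The problem is the first half. Your entire derivation of Sybil-proofness rests on the ``emulation step'': that contracting the fake cluster $F_i$ to a single node yields a reported graph of the form $\bigl((v_i',r_i'),\theta_{-i}^{*}\bigr)$ with $r_i'\subseteq r_i$, and that the cluster's aggregate payoff equals the single node's payoff on that contracted graph. You flag this as the crux but do not establish it, and it is false for diffusion mechanisms in general: the mechanism is run on the \emph{reported} graph, not on its contraction, and inserting fake nodes changes path lengths, critical ancestors, and who lies on the winning path in ways no single-node report $(v_i',r_i')$ with $r_i'\subseteq r_i$ can reproduce. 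The paper's own Section~\ref{sec: comparison} supplies the counterexample to your step: IDM satisfies truthful referral in the sense of Definition~\ref{def: IC} (it is fully incentive compatible), yet agent $l$ strictly gains by inserting a fake child $l^{*}$ with a higher bid, precisely because $l^{*}$ becomes a critical node that a contraction cannot mimic. So truthful referral as defined (deviations restricted to $r_i'\subseteq r_i$) does not by itself yield Sybil-proofness as defined (deviations allowing $r_i^{*}\subseteq r_i\cup F_i$); any correct proof of the first half must either impose a contraction-invariance property on the mechanism or reinterpret ``truthful referral'' so that the deviation space already contains Sybil-augmented reports. The same objection applies to your collusion argument, where merging a group $C_i$ into one node likewise changes the graph the mechanism actually sees (and there the relevant quantifier runs over the joint reports of all members, not over a single node's report).
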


For the diffusion mechanism $M$, given the reported profile $\theta^{\prime}$, the overall revenue is equivalent to the sum of all agents' payments and denoted as $Rev^{M}(\theta^{\prime})=\sum_{i \in N}p_{i}(\theta^{\prime})$. 

\begin{definition}
\label{def: budget-balance}
    The diffusion mechanism $M=\langle x,p \rangle$ is \textbf{budget balanced (BB)} if $Rev^{M}(\theta^{\prime})\geq 0$ for all $\theta^{\prime} \in \Theta$, where $Rev^{M}(\theta^{\prime})=\sum_{i \in N}p_{i}(\theta^{\prime})$.
\end{definition}

In addition, the social welfare of an allocation is the sum of the reported valuations of all agents who win the item, and it is defined as $SW^{M}(\theta^{\prime})=\sum_{i \in N}x_{i}(\theta^{\prime})v_{i}$. 

\begin{definition}
\label{def: efficient}
    The diffusion mechanism $M=\langle x,p \rangle$ is \textbf{efficient} if $SW^{M}(\theta^{\prime})=\mathop{\max}\limits_{i \in N}v_{i}$ for all $\theta^{\prime} \in \Theta$, where $SW^{M}(\theta^{\prime})=\sum_{i \in N}x_{i}(\theta^{\prime})v_{i}$.
\end{definition}

In addition, in order to loosen the restriction of the diffusion mechanism, we make the following assumptions.

\begin{assumption}
\label{assumption: kind-item_seeker}
    All agents are kind and item-seekers. Mathematically, they can be written as
    \begin{itemize}
        \item (kindness) if $u_{i}((v_{i}^{\prime}, r_{i}), \theta_{-i}) = u_{i}((v_{i}^{\prime}, r_{i}^{\prime}), \theta_{-i}^{*}) =0$, then $(v_{i}^{\prime}, r_{i}^{\prime})=(v_{i}^{\prime}, r_{i})$
        \item (item-seeker) if $u_{i}((v_{i}, r_{i}^{\prime}), \theta_{-i}^{*})  = u_{i}((v_{i}^{\prime}, r_{i}^{\prime}), \theta_{-i}^{*})=0$, then $(v_{i^{\prime}}, r_{i}^{\prime})=(v_{i}, r_{i})$.
    \end{itemize}
\end{assumption}

The assumption of kindness ensures all agents prefer to spread the sale information even if they have no gain. The assumption of the item-seeker promises all agents would prefer to purchase the item rather than lose it. Intuitively, agents only misreport if it can improve their utilities; otherwise, they report truthfully.

\begin{assumption}
\label{assumption: complete-information}
    Assume all agents know the highest valuation among their descendants.
\end{assumption}

Assumption \ref{assumption: complete-information} simplifies our analysis, and it is a reasonable assumption in real-world scenarios. Without it, agents may be unwilling to invite others as it could negatively impact their utility.

\subsection{Impossibility Results}
\label{sec: impossibility}

Before proposing the novel diffusion mechanism that satisfies the desirable properties, we introduce several impossibility results in this section. The construction of our diffusion mechanism is motivated by the following impossibility theorems.

\begin{theorem}
\label{thm: impossibility}
    It is impossible for any diffusion mechanism to simultaneously achieve efficiency, individual rationality, and incentive compatibility.
\end{theorem}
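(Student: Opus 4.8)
The plan is to exhibit a very small network on which efficiency and incentive compatibility together pin down the winner and, up to a non-positive additive constant, every payment, and then to show that on one instance the total revenue is forced to be negative. Since $s \in N$ and the seller's utility is the total revenue $Rev^{M}$ (the seller collects every payment and values keeping the item at $0$), individual rationality for the seller is exactly the requirement $Rev^{M}(\theta')\ge 0$, so a forced deficit yields the contradiction. First I would record the leverage supplied by the hypotheses. Efficiency means the item always goes to the connected agent of highest reported value. For a fixed report of the other connected agents, the truthful-bid part of IC gives the standard single-parameter picture for agent $i$: as a function of $v_i'$ the allocation is a threshold rule, $i$ pays that threshold when it wins plus a constant offset equal to $i$'s payment at bid $0$, and individual rationality at the bottom type forces this offset to be non-positive. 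The truthful-referral part of IC says $i$ cannot gain by deleting the subtree behind one of its children: $u_i$ under truthful reporting is at least $u_i$ after any such deletion.

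Next I would specialize to the line $s \to 1 \to 2$ with $r_1=\{2\}$ and $r_2=\emptyset$. When agent $1$ hides child $2$ only $\{s,1\}$ is connected; efficiency makes agent $1$ win for every positive report, so its payment there is a single non-positive constant $c_1$ (constant because a disconnected agent cannot influence the outcome, non-positive by the bottom-type condition above), and its utility is at least $v_1$. Now let agent $1$ have true value $t$ and agent $2$ have true value $t/2$, both reporting truthfully: both are connected, agent $1$ wins, and the threshold characterization makes it pay $t/2$ plus its payment at bid $0$ in this graph, say $q_1(t/2)$. The truthful-referral inequality for agent $1$, honest reporting versus hiding child $2$, reads $t/2 - q_1(t/2)\ge t - c_1$, hence $q_1(v)\le c_1 - v$ for every $v>0$; in particular this offset can be driven arbitrarily far below $0$.

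Finally I would close on the instance where agent $1$ truly has value $0$, still with $r_1=\{2\}$, and agent $2$ truly has value $v>0$, everyone reporting truthfully. Efficiency makes agent $2$ the winner; the threshold characterization plus individual rationality force agent $2$'s payment to be at most $0$; and agent $1$'s payment is exactly $q_1(v)\le c_1 - v$. Thus $Rev^{M}=p_1+p_2\le c_1 - v<0$, the seller's individual rationality fails, and the deficit in fact grows without bound in $v$. The step I expect to be the main obstacle is the second one: it transfers the payment characterization across two reported profiles with different graphs, namely agent $1$ with and without the edge to $2$, whereas the single-parameter characterization only acts bid-by-bid inside one fixed graph. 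Making this precise relies on the structural fact that a disconnected agent changes nothing, so that agent $1$'s payment when alone is a genuine constant $c_1$ — which is exactly what lets the required subsidy, and hence the deficit, be made arbitrarily large.
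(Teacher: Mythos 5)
Your proof is correct, and it uses the same two--agent line $s\to a\to b$ as the paper, but it closes the contradiction along a genuinely different (and sounder) route. The paper argues that the referrer $a$ must be rewarded at least $v_a$ and that this reward is therefore ``related to his bid,'' so $a$ overbids and IC fails directly. That step is not actually forced: a reward to $a$ equal to $v_b'$ (the VCG/Clarke rebate), which depends only on the child's bid, also covers $a$'s opportunity cost and is perfectly incentive compatible for $a$ --- so, read literally, the paper's chain does not rule out the network VCG mechanism, which is efficient, IC, and IR for every buyer. What VCG violates is non-negative revenue, and that is exactly where you land: you read IR over all of $N$ (which includes $s$, per the paper's own definition of $N$), so seller IR becomes $Rev^{M}\ge 0$, and you then use the Myerson threshold-plus-offset characterization within each fixed graph together with the referral-IC inequality across graphs to force $q_1(v)\le c_1-v$ and hence an unbounded deficit. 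This is the standard ``efficiency $+$ IC pins down payments, the referrer's subsidy cannot be recouped from the winner'' argument; it buys rigor (every payment bound is derived rather than asserted, and the delicate transfer of the payment characterization across the two graphs is isolated and justified by the disconnected-agent observation) at the cost of making explicit that the theorem needs the seller included in IR, i.e.\ budget balance --- which is consistent with the paper's own remark that existing mechanisms sacrifice IC and efficiency precisely to obtain BB and IR.
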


\begin{proof}
\begin{figure}[htp]
    \centering
    \includegraphics[width=0.2\textwidth]{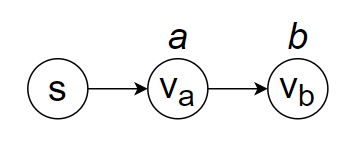}
    \caption{Network example to proof Theorem \ref{thm: impossibility}.}
    \label{fig: simple}
\end{figure}
    The proof is the same as in \cite{zhang2020redistribution}. Consider an example of Figure \ref{fig: simple}. If agent $a$ does not invite agent $b$, $a$ purchases the item with a price between $[0, v_{a}]$. 
    
    If $a$ invites $b$ and $v_{b}>v_{a}$, for an efficient allocation, the item should be allocated to $b$. Moreover, $a$ should be rewarded at least $v_{a}$; otherwise, it is worthless for $a$ to invite $b$. 
    
    As the reward of agent $a$ is related to his bid, he can bid $v_{a}^{\prime}>v_{a}$ to maximize his utility, which contradicts IC.

    On the other hand, if agent $b$ pays more than $v_{b}$, the mechanism fails to achieve IR. However, if the payment of agent $b$ is related to his bid, the mechanism also fails to achieve IC.
    
\end{proof}

Theorem \ref{thm: impossibility} reveals that in social network settings, it is impossible for a mechanism to satisfy all the properties we mentioned in Section \ref{sec: preliminaries and model}. In particular, most of the existing works \cite{li2017mechanism,li2019diffusion,zhang2020incentivize} sacrifice IC and the efficiency of the diffusion mechanism to achieve BB and IR.

\begin{theorem}
\label{thm: upper bound}
    For any diffusion mechanism, the revenue is upper-bounded by the maximum valuation in the sub-network, excluding the group rooted at the seller's child, who is also the ancestor of the winner. Otherwise, the mechanism is \textbf{not} truthful referral.
\end{theorem}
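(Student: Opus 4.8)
The plan is to prove the contrapositive: if at the truthful profile $\theta$ the revenue exceeds the stated bound, then a whole coalition profits from a non-truthful referral, which contradicts truthful referral via Corollary~\ref{coro: sybil-collusion-truthful}. Fix the realized network $G(\theta)$; let $w$ be the winner, let $c$ be the seller's child that is an ancestor of $w$, and let $T_{c}\subseteq N$ be the group rooted at $c$ (so $c,w\in T_{c}$). Write $v^{*}:=\max_{i\in N\setminus T_{c}}v_{i}$. I will also use that $M$ is individually rational (this is what lets me control losers' payments), and I want to show $Rev^{M}(\theta)\le v^{*}$.

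First I would split the revenue as $Rev^{M}(\theta)=\sum_{j\in T_{c}}p_{j}(\theta)+\sum_{j\in N\setminus T_{c}}p_{j}(\theta)$. Since the winner $w$ lies in $T_{c}$, every agent $j\in N\setminus T_{c}$ is a loser, so IR gives $u_{j}(\theta)=-p_{j}(\theta)\ge 0$, i.e.\ $p_{j}(\theta)\le 0$; hence it is enough to bound $\sum_{j\in T_{c}}p_{j}(\theta)$ by $v^{*}$. Next I would consider the deviation in which the whole group $T_{c}$ collapses onto its root: $c$ reports the empty child set $r_{c}^{*}=\emptyset$ together with a bid large enough to be the top bidder among $(N\setminus T_{c})\cup\{c\}$ — it knows such a value by Assumption~\ref{assumption: complete-information}, e.g.\ $v_{c}^{*}=\max_{j\in T_{c}}v_{j}$ — while the remaining agents keep their truthful reports. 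Then $c$ wins the residual auction over $(N\setminus T_{c})\cup\{c\}$ at some price $\hat p_{c}$, and the coalition reallocates the item internally to its highest-value member, collecting at least $v_{w}$. So the coalition's total utility under this deviation is at least $v_{w}-\hat p_{c}$, whereas at the truthful profile it equals $v_{w}-\sum_{j\in T_{c}}p_{j}(\theta)$ (only $w$ wins, and $w\in T_{c}$). Collusion-proofness — which holds by Corollary~\ref{coro: sybil-collusion-truthful} — then forces $\sum_{j\in T_{c}}p_{j}(\theta)\le \hat p_{c}$; combined with a bound $\hat p_{c}\le v^{*}$ this yields $Rev^{M}(\theta)\le v^{*}$. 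Conversely, if $Rev^{M}(\theta)>v^{*}$ then $\sum_{j\in T_{c}}p_{j}(\theta)\ge Rev^{M}(\theta)>v^{*}\ge \hat p_{c}$, so the deviation strictly improves the coalition's payoff and $M$ fails truthful referral.

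The one step that is not mere bookkeeping — and which I expect to be the main obstacle — is the inequality $\hat p_{c}\le v^{*}$: the price $c$ must pay to win the auction restricted to $(N\setminus T_{c})\cup\{c\}$ cannot exceed the largest valuation outside $T_{c}$. For an efficient, IR, truthful-bid mechanism this is clean, because the winner's payment is then independent of its own bid and cannot exceed the competition it has to outbid, namely $v^{*}$; for a fully general diffusion mechanism one must instead lean on individual rationality of the residual auction together with the item-seeker tie-breaking in Assumption~\ref{assumption: kind-item_seeker} to exhibit a report of $c$ that secures the item at a price at most $v^{*}$. It is also worth noting why the coalition is essential: applying the truthful-referral inequality to $c$ alone only controls $p_{c}(\theta)$ and says nothing about the rewards the mechanism must pay to the agents strictly between $c$ and $w$ on the path; only the collective deviation lets the outside option $v_{w}$ (reached by internal reallocation) absorb all of those rewards at once. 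Everything else — the revenue split, the non-positivity of losers' payments, the reallocation, and the single appeal to Corollary~\ref{coro: sybil-collusion-truthful} — is routine, and the degenerate cases $T_{c}=\{c,w\}$ and $T_{c}=\{w\}$ (winner a seller's child) are handled by the same argument with no change.
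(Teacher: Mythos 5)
Your argument is essentially the paper's own: it too proves the claim by letting the group rooted at the seller's child who is the winner's ancestor collude into a single agent, win at the best valuation outside that group, and redistribute the surplus internally, contradicting the collusion-proofness that truthful referral implies via Corollary~\ref{coro: sybil-collusion-truthful}. Your version is simply a more general and more carefully bookkept rendering of the paper's single-figure example, and the one step you flag as the main obstacle ($\hat p_{c}\le v^{*}$) is precisely the step the paper also leaves unjustified by asserting that the collapsed agent ``wins the item and pays $v_{c}$.''
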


\begin{proof}
    \begin{figure}[htp!]
        \centering
        \includegraphics[width=0.7\linewidth]{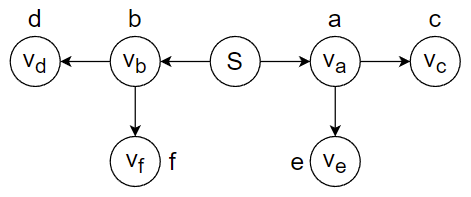}
        \caption{Network example, with valuation relationships $v_{d}>v_{c}>v_{e}>v_{f}>v_{a}>v_{b}$.}
        \label{fig: upper_bound}
    \end{figure}
    Agent $d$ has the highest valuation in the complete network, and agent $c$ has the highest valuation in the sub-network rotted at agent $a$.

    If the seller receives more than $v_{c}$, agent $b$ can collude with his descendants (agents $d$ and $f$) to form a group $C_{b}$. Note that an agent can communicate with his children in the settings of networks. Consider agents in $C_{b}$ pretending to be an agent $b^{*}$ and bidding $v_{b^{*}}^{\prime} = max(v_{b}, v_{d}, v_{f})$. 
    
    Then, agent $b^{*}$ wins the item and pays $v_{c}$, then agents in the group $C_{b}$ redistribute the surplus $(v_{d}-v_{c})$ among themselves.
        
    As a result, the utilities are always higher than those without collusion.
\end{proof}

For any diffusion mechanism in which revenue exceeds the upper bound we defined in Theorem \ref{thm: upper bound}, agents may apply Sybil attacks or form a coalition group such that these agents can improve their utilities without worsening other agents, excluding the seller.

\section{Truthful Referral Diffusion Mechanism}
\label{sec: mechanism}

In this section, to overcome the challenges in mechanism design over social networks, we present the Truthful Referral Diffusion Mechanism (TRDM). TRDM is different from existing mechanisms, which sacrifice efficiency and truthful referral to ensure BB and IR. It outputs an efficient outcome and ensures all agents refer truthfully. Before introducing TRDM in detail, we establish some fundamental definitions and notations:

\begin{definition}
\label{def: efficient diffusion sequence}
    Given the reported profiles $\theta^{\prime}$ and the corresponding directed network $G(\theta^{\prime})$, for each $i \in V(\theta^{\prime})$ (denote $V(\theta^{\prime})$ as $V$), we define $\Gamma_{i}=\{s, a_{1}, a_{2},..., i \}$ as the shortest path from seller $s$ to agent $i$.
\end{definition}

\begin{itemize}
    \item The winner of the item is denoted as $w$. 

    \item For a node $i$, $r^{*}_{i}$ represents a child of node $i$ that is located in the shortest path to the winner $w$, $\Gamma{w}$ (e.g., $r^{*}_{j}=l$ in Figure \ref{subfig: example}).

    \item The set $D_{i}^{V}$ denotes the reachable descendants of node $i$ within the set of agents $V$ (e.g. $D_{j}^{V}=\{l, m, n\}$ in Figure \ref{subfig: social network}).

    \item $v_{V}^{*}$ represents the highest valuation reported among the set of agents $V$ (e.g., $v_{V}^{*}= \mathop{\max}\limits_{i \in V}v_{i}^{\prime}$).
        
    \item The set $V_{-i}$ comprises all nodes in $V$, excluding node $i$ and its descendants $D_{i}^{V}$.
\end{itemize}

\begin{definition}[TRDM]
\label{algo: truthful referral mechanism}
    The Truthful Referral Diffusion Mechanism works as follows
    \begin{enumerate}
        \item Generate a directed graph $G(\theta^{\prime})=(V(\theta^{\prime}), E(\theta^{\prime}))$ from the reported profile $\theta^{\prime}$ (hereafter, denote $(V(\theta^{\prime}), E(\theta^{\prime})$ as $(V, E)$) .

        \item The mechanism selects the highest bidder $w \in \mathop{\arg\max}\limits_{i \in V} v_{i}^{\prime}$ as the winner. (In case of multiple winners, the agent closest to the seller is selected.)
        
        \item The mechanism identifies the shortest path $\Gamma_{w}$. (In case of multiple shortest paths, at the splitting point, select an agent who has a larger valuation in the sub-network without descendants in $\Gamma_{w}$.)

        \item The payment of the winner is defined as $p_{w}(\theta^{\prime})=v_{V_{-w}}^{*}$.

        \item The revenue of the seller is defined as $Rev^{TRDM}(\theta^{\prime})=v_{V_{-r_{s}^{*}}}^{*}$.

        \item The referral reward of agent $i \in \Gamma_{w} \setminus \{s, w\}$ is defined as $p_{i}(\theta^{\prime})=v_{V_{-i}}^{*}-v^{*}_{V_{-r_{i}^{*} }}$; otherwise, $p_{i}(\theta^{\prime})=0$ for $i \notin \Gamma_{w}$.
    \end{enumerate}
\end{definition}

Note that in steps $2$ and $3$, random tie-breaking is applied when multiple options exist.

The process of TRDM is straightforward. It works similarly to salesmen (brokers), where agents purchase the item from the seller and resell it to their children in order to earn a profit margin. However, in TRDM, all the transactions are centralized, and only the winner $w$ makes a payment, while others are rewarded by the seller. In addition, the reward is the highest valuation difference between the network without him and the network without his child, who is also in $\Gamma_{w}$.

\begin{figure}[htb!]
    \centering
    \subfloat[Random network]{%
        \centering
        \includegraphics[width=.35\textwidth]{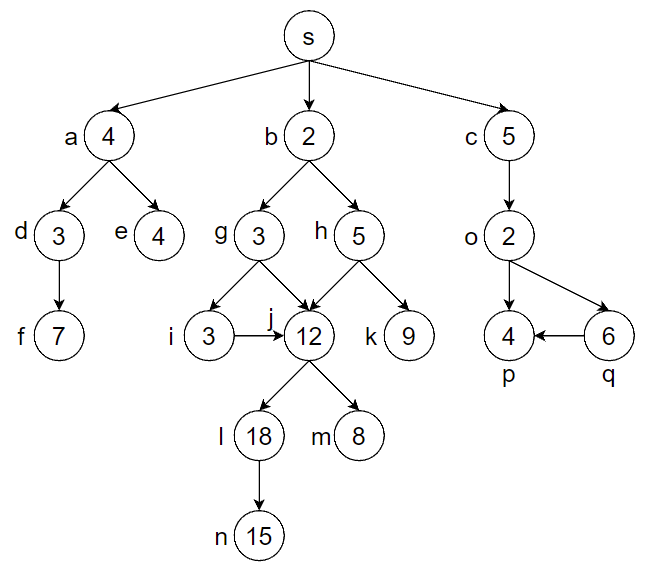}%
        \label{subfig: social network}%
    }\\
    \subfloat[Process of TRDM]{%
        \centering
        \includegraphics[width=.35\textwidth]{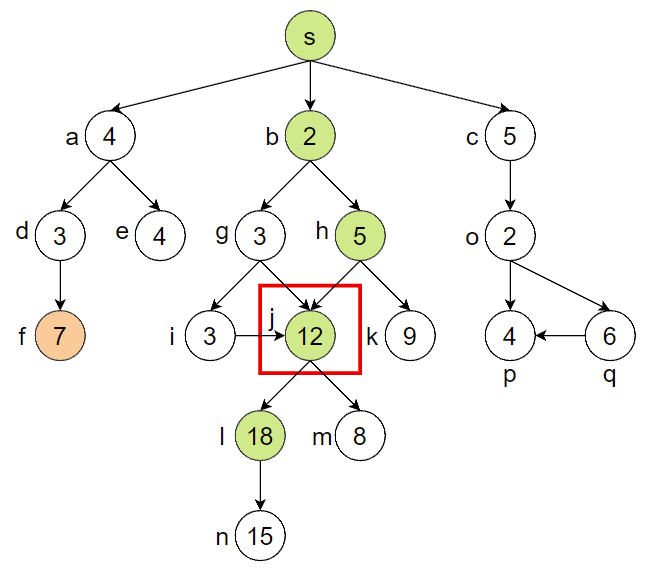}%
        \label{subfig: example}%
    } 
    \caption{A random social network and the corresponding TRDM process. The number in the circle represents the bid price.}
    \label{fig: social network}
\end{figure}

We demonstrate a running example of TRDM by using the network in Figure \ref{subfig: social network}. Agent $l$ has the highest bid in the network and denotes him as a winner $w$. 

Then, we identify the shortest path from the seller to the winner, $\Gamma_{w}$. However, there are two shortest paths such as $\Gamma_{w}=\{s, b, g, j, l(w) \}$ and $\Gamma_{w}=\{s, b, g, h, l(w) \}$. Considering the sub-networks rooted at agent $g$ and $h$, respectively, removing agents who are in $\Gamma_{w}$, agent $k$'s valuation is higher than that of agent $i$. Since agent $k$ is invited by agent $h$, we select the path which includes agent $h$ and plot it in green. (Recall step $3$.)

The red box represents the maximum bid in the network without the participation of the winner $w$ (agent $l$) and his descendants ($n$). It is also the payment of the winner $w$. The orange circle is the maximum bid in the network without the participation of agent $b$ and his descendants ($g$, $h$, $i$, $j$, $k$, $l$, $m$, and $n$). It is also the revenue of the seller.

The payment (reward) of agent $j$ is $9-12=-3$, where $9$ is the valuation of agent $k$, which is the maximum bid in the network without agent $j$ and his descendants, $12$ is the bid of agent $j$,  which is the maximum bid in the network without agent $l$ and his descendants. Similarly for other agents $\{b, h, j \}$, the rewards are $\{0, 2, 3 \}$.

\subsection{Properties of TRDM}
\label{sec: property}

In this section, we discuss the properties that the Truthful Referral Diffusion Mechanism achieves.

\begin{lemma}
\label{lemma: IC and IR}
    TRDM is individually rational and partially incentive compatible, which agents refer truthfully (e.g., $\theta_{i}^{\prime}=(v_{i}^{\prime}, r_{i})$).
\end{lemma}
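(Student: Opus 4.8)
The plan is to establish the two claims in Lemma~\ref{lemma: IC and IR} separately: first individual rationality, then the ``truthful referral'' half of partial incentive compatibility (so the reports being compared are of the form $\theta_i^\prime = (v_i^\prime, r_i)$ versus $(v_i^\prime, r_i^\prime)$ with $r_i^\prime \subseteq r_i$, holding the bid fixed).

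\textbf{Individual rationality.} I would go agent by agent according to their role in the outcome of TRDM. For the winner $w$, the utility is $v_w^\prime - p_w(\theta^\prime) = v_w^\prime - v_{V_{-w}}^{*}$; since $w$ is the highest bidder and $V_{-w}$ is a subset of $V$ not containing $w$, we have $v_w^\prime \geq v_{V_{-w}}^{*}$, so $u_w \geq 0$ (under truthful bidding $v_w^\prime = v_w$, this gives $u_w = v_w - v_{V_{-w}}^* \geq 0$). For an agent $i \in \Gamma_w \setminus \{s,w\}$, the utility is $-p_i(\theta^\prime) = v^*_{V_{-r_i^*}} - v_{V_{-i}}^{*}$, and since $V_{-i} \subseteq V_{-r_i^*}$ (removing $i$ and its descendants removes at least as much as removing the child $r_i^*$ and its descendants, because $r_i^*$ and its descendants are themselves descendants of $i$), the reward is non-negative, so $u_i \geq 0$. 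Every other agent gets allocation $0$ and payment $0$, hence utility $0$. So IR holds for all $i \in N$.

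\textbf{Truthful referral.} Here I fix an agent $i$ with reported bid $v_i^\prime$ and compare reporting $r_i$ against reporting some $r_i^\prime \subsetneq r_i$; all other agents report truthfully on the induced graph. The key structural observation is that the revenue $Rev^{TRDM} = v^*_{V_{-r_s^*}}$ and, more generally, each agent's payment depend on the graph only through which nodes are reachable. I would split into cases by $i$'s position. If $i \notin \Gamma_w$ in the truthful-referral instance, then $i$ receives $0$ regardless, and by the item-seeker/kindness assumptions he has no incentive to deviate to something giving him $0$ again (and he cannot do better, since shrinking $r_i^\prime$ only removes nodes, which cannot make a non-path, non-winning agent into a winner or path agent with positive reward). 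If $i \in \Gamma_w$, the delicate point is to show that cutting some children out of $r_i^\prime$ cannot raise $v^*_{V_{-r_i^*}}$ while not raising $v^*_{V_{-i}}$, i.e.\ cannot increase the reward $v^*_{V_{-i}} - v^*_{V_{-r_i^*}}$; I would argue that removing nodes from the graph weakly decreases $v^*_{V_{-r_i^*}}$ (fewer agents to take the max over — noting $i$ itself and the portion of the graph not below $i$ are unaffected) and also weakly decreases $v^*_{V_{-i}}$, and then check that the net effect on the difference is non-positive, using that the agents removed by cutting $r_i^\prime$ all lie in $D_i^V$. I also need to handle the possibility that misreporting $r_i^\prime$ changes the identity of the winner $w$ or the chosen shortest path $\Gamma_w$, which is where the tie-breaking rule in step~3 (favoring the child with the larger valuation in the sub-network off $\Gamma_w$) has to be invoked to keep the comparison clean; the closest-to-seller tie-break in step~2 matters similarly.

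\textbf{Main obstacle.} I expect the hard part to be the truthful-referral case when $i \in \Gamma_w$: I must carefully track how removing a subtree below one of $i$'s children simultaneously affects both $v^*_{V_{-i}}$ (the term $i$ benefits from making large, since it is added) and $v^*_{V_{-r_i^*}}$ (the term $i$ benefits from making small, since it is subtracted), and show the reward $v^*_{V_{-i}} - v^*_{V_{-r_i^*}}$ cannot strictly increase — including the subtle subcase where the deviation makes $i$ leave $\Gamma_w$ entirely (e.g.\ by disconnecting the winner, so a different, lower-valued winner is selected along a path avoiding $i$), in which $i$'s reward drops to $0$ but could previously have been positive, so this must be shown to be a non-improving deviation. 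The monotonicity-under-node-removal facts are intuitively clear but need to be stated precisely relative to the definitions of $V_{-i}$ and $v^*$, and combined with the tie-breaking conventions to rule out pathological re-routings.
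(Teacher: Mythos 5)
Your proposal is correct and follows essentially the same route as the paper's proof: IR by cases on the agent's role (your derivation of the non-negative reward from $V_{-i}\subseteq V_{-r_i^{*}}$ is actually more explicit than the paper's ``only the winner pays''), and referral-truthfulness by cases on whether $i$ lies on $\Gamma_{w}$, with your winner-change/path-exit subcases matching the paper's Cases 2a and 2b. Just fix the sign in the $i\in\Gamma_{w}$ step: the reward is $v^{*}_{V_{-r_i^{*}}}-v^{*}_{V_{-i}}$ (as you correctly write in the IR part), and the clean observation is that cutting children leaves $V_{-i}$ unchanged while only shrinking $V_{-r_i^{*}}$, so the reward can only weakly decrease.
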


\begin{proof}
    (\textbf{IR}) Assume agent $i$ reports truthfully $\theta_{i}$. We first consider he is not allocated the item. If agent $i$ is not in the shortest path of winner $w$ ($i \notin \Gamma_{w}$), according to the payment policy, his payment is $0$. If he is in $\Gamma_{w}$, his utility can be $u_{i}(\theta)=x_{i}(\theta)v_{i}-p_{i}(\theta)=-p_{i}(\theta)$. Since he is not the winner, according to the rule of TRDM, only the winner pays the seller. Hence, $p_{i}(\theta) \leq 0$ and $u_{i}(\theta) \geq 0$.
    
    Now we consider the case when agent $i$ wins the item. Based on the payment rule, the winner pays the price bid by the second-highest bidder, which is never higher than his valuation. (i.e., $p_{w}(\theta)=v_{V_{-w}}^{*} \leq v_{w}$) Therefore, his utility $u_{w}(\theta)=v_{w}-p_{w} \geq 0$, and TRDM is individually rational.
    
    (\textbf{partially IC}) \textbf{Case 1:} We first consider agent $i$ has the largest valuation in the network started from him. According to the payment policy, if he is not the winner, he is also not the winner's ancestor. As a result, he is not rewarded, and misreporting will not change his utility.
    
    If he is the winner, the payment only depends on the network without him and his descendants. Hence, misreporting $v_{i}^{\prime}$ and $r_{i}^{\prime}$ will not change the allocation and his utility. 
    
    Moreover, if $v_{i}^{\prime} < v_{i}$, he may lose the item, which leads to zero utility. If $v_{i}^{\prime} > v_{i}$, he may purchase the item at a price that is higher than $v_{i}$, which leads to a negative utility. Therefore, misreporting $\theta_{i}^{\prime}=(v_{i}^{\prime}, r_{i}^{\prime})$ would not be beneficial for agent $i$.
    
    \textbf{Case 2:} Now, we consider the case where agent $i$ does not have the largest valuation in the network started from him. If he is not the ancestor of the winner ($i \notin \Gamma_{w}$), whatever he reports $v_{i}^{\prime}$ and $r_{i}^{\prime}$, his utility is always $0$. and hence misreporting will not improve his utility.

    If he is the ancestor of the winner ($i \in \Gamma_{w}$), let $j$ be the child of $i$ (e.g., $j \in r_{i}$) and $j \in \Gamma_{w}$. Based on the rule of TRDM, the reward of agent $i$ is related to the network without him ($v_{V_{-i}}^{*}$) and the sub-network without agent $j$ ($v_{V_{-j}}^{*}$). If the highest valuation on the sub-network without agent $j$ is exactly the valuation of agent $i$ (e.g., $v_{V_{-j}}^{*}=v_{i}$), then agent $i$ can misreport $v_{i}^{\prime}=[v_{i}, v_{V_{-w}}^{*})$ to improve his utility. More details on the bidding strategy of these agents are discussed later.

    However, agent $i$ cannot improve his utilities by misreporting $r_{i}^{\prime}$. 

    \textbf{Case 2a}: Consider the case that winner $w$ such that $w \in r_{i}$ and $w \notin r_{i}^{\prime}$, if he misreports $r_{i}^{\prime}$, then there exists a new winner $w^{*}$ and he is not in the new shortest path $\Gamma_{w^{*}}$ (e.g., $i \notin \Gamma_{w^{*}}$). As a result, agent $i$ is not rewarded by the seller. 

    \textbf{Case 2b}: If agent $i$ becomes the winner or the ancestor of the new winner, his new utility is not higher than that when he refers truthfully, as the valuation of the new winner is lower than the previous winner.
    
    In conclusion, TRDM is partially IC, with all agents referring truthfully (e.g., $\theta_{i}^{\prime}=(v_{i}^{\prime}, r_{i})$ for all $i \in \Gamma_{w} \setminus \{s, w\}$).
\end{proof}

Lemma \ref{lemma: IC and IR} reveals that it is always optimal for agents to participate in the auction and refer their children truthfully under TRDM. In addition, as TRDM is a truthful referral mechanism, according to Corollary \ref{coro: sybil-collusion-truthful}, agents cannot improve their utility by applying Sybil attacks or forming a collusion group with others under TRDM.

\begin{corollary}
\label{coro: TRDM sybil and collusion}
    TRDM is Sybil-proof and collusion-proof.
\end{corollary}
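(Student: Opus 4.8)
The plan is to obtain the statement directly from what has already been established. Lemma~\ref{lemma: IC and IR} shows that TRDM is partially incentive compatible in the truthful-referral sense: for every agent $i$ and every reported bid $v_i^{\prime}$, reporting the true neighbour set $r_i$ weakly dominates reporting any $r_i^{\prime} \subseteq r_i$, i.e.\ $u_i((v_i^{\prime}, r_i), \theta_{-i}) \geq u_i((v_i^{\prime}, r_i^{\prime}), \theta_{-i}^{*})$. This is exactly the hypothesis of Corollary~\ref{coro: sybil-collusion-truthful}, which asserts that any truthful-referral mechanism is both Sybil-proof and collusion-proof. Chaining Lemma~\ref{lemma: IC and IR} with Corollary~\ref{coro: sybil-collusion-truthful} therefore yields the claim, and no property of TRDM beyond Lemma~\ref{lemma: IC and IR} is invoked.

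Since Corollary~\ref{coro: sybil-collusion-truthful} is stated without proof, I would also spell out the reduction directly for TRDM. Fix an agent $i$ and any Sybil manipulation: a set $F_i$ of fake accounts with reported bids $(v_j^{*})_{j \in F_i}$ and reported neighbour sets $r_j^{*} \subseteq r_i \cup F_i$. Because the fake accounts are reachable in $G(\theta^{\prime})$ only through $i$'s location, they occupy a connected block spliced into the network in place of agent $i$. The key structural fact about TRDM is that only the winner pays and the referral rewards along $\Gamma_{w}$ telescope: for consecutive nodes $i$ and $r_i^{*}$ of $\Gamma_{w} \setminus \{s,w\}$ the reward of $i$ is $v_{V_{-i}}^{*} - v_{V_{-r_i^{*}}}^{*}$. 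Summing $\sum_{j \in F_i}(-p_j)$, and adding $v_i$ when some fake account wins, collapses: the intermediate terms cancel and the block's aggregate utility coincides with the utility that a single agent occupying $i$'s slot obtains against the same $\theta_{-i}$, for a suitable bid and an induced referral report $r_i^{\prime} \subseteq r_i$. By the truthful-referral part of Lemma~\ref{lemma: IC and IR} (and by IR when no fake account wins) this is bounded by $u_i((v_i^{\prime}, r_i^{\prime}), \theta_{-i})$, which is the honest-report utility on the left of Definition~\ref{def: sybil}; this proves Sybil-proofness.

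The collusion case is symmetric: a group $C_i$ rooted at $i$ merges several genuine agents into one reported identity with $r_i^{*} \subseteq r_i^{\prime} \subseteq r_i$, so the manipulated network is again obtained from the truthful one by contracting $i$'s block. The same telescoping cancellation shows the group's aggregate utility under the manipulation cannot exceed the utility of the honest report $(v_i^{\prime}, r_i^{\prime})$, so no profitable collusion group $C_i$ exists and TRDM is collusion-proof.

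The main obstacle is the telescoping/aggregation step: one must verify, under the TRDM payment rule and the tie-breaking in steps~2 and~3, each of the cases where the block contains the winner, contains a node of $\Gamma_{w} \setminus \{s,w\}$ but not the winner, or is disjoint from $\Gamma_{w}$, while keeping careful track of which reported bid a fake account contributes to $v_{V_{-i}}^{*}$ when it wins and when it does not. Once one checks that in every case the net transfer extracted from the block depends only on the highest bid the block exposes to the rest of the network --- the same quantity a lone honest agent at $i$'s position faces --- the reduction to Lemma~\ref{lemma: IC and IR} goes through, with Assumptions~\ref{assumption: kind-item_seeker} and~\ref{assumption: complete-information} disposing of the degenerate zero-utility ties.
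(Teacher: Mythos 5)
Your first paragraph is exactly the paper's argument: the corollary is obtained by chaining Lemma~\ref{lemma: IC and IR} (TRDM is truthful referral) with Corollary~\ref{coro: sybil-collusion-truthful}, and the paper offers nothing more. Your additional telescoping reduction goes beyond the paper by sketching a direct justification of Corollary~\ref{coro: sybil-collusion-truthful} for TRDM (which the paper indeed leaves unproven, and which is not immediate since Sybil and collusion manipulations alter the agent set rather than merely the reported neighbour set); that sketch identifies the right structural facts, though the case analysis you flag would still need to be carried out to make it fully rigorous.
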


Due to the fact that TRDM is a non-truthful bid mechanism, we investigate its equilibrium. It is important to note that TRDM is partially IC, with all agents referring truthfully. This simplifies the analysis of equilibrium.

\begin{proposition}
\label{prop: equlibrium}
    Under TRDM with agents $i \in N$, given a real non-negative parameter $\epsilon$, if all agents know the highest valuation among their descendants, there exists an $\epsilon$-approximate Nash equilibrium in which   
    \begin{itemize}
        \item $r_{i}^{\prime}=r_{i}$ for $i \in N$ (all agents refer truthfully).
        \item the bid function $b_{i}(v_{i}, v_{D_{i}^{V}}^{*})$ for $i \in N$ is given by
        \begin{equation}
        \label{eq: bid function}
            v_{i}^{\prime} = b_{i}(v_{i}, v_{D_{i}^{V}}^{*})  = max(v_{i}, v_{D_{i}^{V}}^{*}-\delta),
        \end{equation}
        where $D_{i}^{V}$ is a set of agent $i$'s descendants in $V$, $v_{D_{i}^{V}}^{*}$ is the highest valuation in the set $D_{i}^{V}$, and an arbitrary small positive $\delta$.
    \end{itemize}
\end{proposition}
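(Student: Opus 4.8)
The plan is to verify the Nash condition for the candidate profile $\sigma$ directly: for each agent $i$ I would identify the supremum of the payoff it can obtain against $\sigma_{-i}$ and show $\sigma_i$ achieves it up to an additive $\delta$; since $\delta>0$ is free, taking $\delta\le\epsilon$ (and smaller than every gap between distinct valuations) then gives the claim. This also explains why only an \emph{approximate} equilibrium is available: an intermediary's reward is strictly increasing in its own bid up to the winner's valuation, so every finite bid is beaten by a slightly larger one and no exact best reply exists. First fix the outcome of $\sigma$. Let $v^*=\max_{i\in N}v_i$, attained by the agent $w^*$ closest to the seller. Under $\sigma$, $w^*$ bids $\max(v_{w^*},v_{D_{w^*}^{V}}^{*}-\delta)=v^*$ because every descendant of $w^*$ has value at most $v^*$, whereas every other agent $j$ bids $\max(v_j,v_{D_j^{V}}^{*}-\delta)<v^*$ (either $v_{D_j^{V}}^{*}<v^*$, or $j$ is an ancestor of $w^*$ and its bid is $\max(v_j,v^*-\delta)<v^*$). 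Hence $w^*$ is the unique top bidder, TRDM allocates to $w^*$ efficiently, and a shortest path $\Gamma_{w^*}$ is chosen; by Assumption~\ref{assumption: complete-information} every agent on $\Gamma_{w^*}$, being an ancestor of $w^*$, knows $v^*$, so the bids prescribed by $\sigma$ are well defined.

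Under $\sigma$ the winner pays $v_{V_{-w^*}}^{*}$, each intermediary $i\in\Gamma_{w^*}\setminus\{s,w^*\}$ is rewarded $v_{V_{-r_i^*}}^{*}-v_{V_{-i}}^{*}\ge 0$ (from $V_{-i}\subseteq V_{-r_i^*}$ and monotonicity of $v_{(\cdot)}^{*}$ in the agent set), and every other agent gets $0$; these payments telescope to the stated revenue $v_{V_{-r_s^*}}^{*}$. The key observation for the deviation analysis is that $v_{V_{-i}}^{*}$, the highest bid outside $i$'s subtree, is independent of $i$'s own report. If $i$ is neither $w^*$ nor an intermediary, its subtree excludes $w^*$, so $v_{V_{-i}}^{*}\ge v^*$; it cannot profitably become the winner (it would pay at least $v^*>v_i$), and since a referral deviation only deletes descendants it cannot become an intermediary, so its payoff stays $0$, matching $\sigma$. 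The winner $w^*$ pays $v_{V_{-w^*}}^{*}$ whatever it bids, so raising its bid changes nothing while lowering it only risks losing the item (payoff then $0<v^*-v_{V_{-w^*}}^{*}$); thus $\sigma$ is already a best reply for $w^*$.

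For an intermediary $i$, hold referrals truthful. Bidding $b<v^*$ keeps $i$ an intermediary with reward $\max(b,Q_i)-v_{V_{-i}}^{*}$, where $Q_i<v^*$ is the highest other bid outside $r_i^*$'s subtree; this is nondecreasing in $b$ with supremum $v^*-v_{V_{-i}}^{*}$ as $b\uparrow v^*$, not attained because at $b\ge v^*$ agent $i$ wins and gets only $v_i-v_{V_{-i}}^{*}<v^*-v_{V_{-i}}^{*}$. Meanwhile $\sigma_i$'s bid $\max(v_i,v^*-\delta)$ yields $\max(v^*-\delta,Q_i)-v_{V_{-i}}^{*}\ge v^*-\delta-v_{V_{-i}}^{*}$, which is within $\delta$ of the supremum and, for $\delta$ small, at least $v_i-v_{V_{-i}}^{*}$, so winning is not profitable either. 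It then remains to rule out that a referral deviation lets an intermediary exceed $v^*-v_{V_{-i}}^{*}$: pruning its subtree only lowers the winner's valuation and, via the telescoping identity together with monotonicity of $v_{(\cdot)}^{*}$, cannot push $v_{V_{-r_i^*}}^{*}-v_{V_{-i}}^{*}$ past $v^*-v_{V_{-i}}^{*}$. Combining the three cases, no unilateral deviation improves a player's payoff by more than $\delta$, so $\sigma$ is a $\delta$-approximate, hence (with $\delta\le\epsilon$) an $\epsilon$-approximate, Nash equilibrium of the stated form.

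I expect the referral side to be the main obstacle, essentially because it requires adapting the referral argument behind Lemma~\ref{lemma: IC and IR} from the truthful-bid profile to this overbidding profile. Since the bid function equates an ancestor's bid with the best value in its subtree minus $\delta$, pruning children propagates upward and perturbs precisely the quantities $v_{V_{-i}}^{*}$ and $v_{V_{-r_i^*}}^{*}$ that fix a player's payment and reward; ruling out any gain from such pruning — particularly for an agent that is an ancestor of $w^*$ yet lies off the chosen shortest path — is where the shortest-path structure of $\Gamma_{w^*}$, the inclusion $V_{-i}\subseteq V_{-r_i^*}$, and the telescoping identity $\sum_i p_i=v_{V_{-r_s^*}}^{*}$ must be used together, and where one must be precise about whether ``descendant'' is read along the shortest-path tree or the whole network.
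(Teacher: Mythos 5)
Your proposal is correct and follows essentially the same route as the paper's proof: the same case split into (i) the top-valuation agent, who wins, pays $v_{V_{-w}}^{*}$ regardless of overbidding, and can only lose by underbidding, (ii) on-path intermediaries, whose reward $v_{V_{-r_i^*}}^{*}-v_{V_{-i}}^{*}$ is nondecreasing in their own bid with unattained supremum $v^{*}-v_{V_{-i}}^{*}$ (which is exactly why only an $\epsilon$-approximate equilibrium exists), and (iii) all remaining agents, whose payoff is identically zero. If anything you are more careful than the paper --- which handles the referral side by simply invoking Lemma \ref{lemma: IC and IR} even though that lemma is stated against truthful bids $\theta_{-i}^{*}$ rather than this overbidding profile --- though note that your claim that a non-winner, non-intermediary's subtree excludes $w^{*}$ can fail for an ancestor of $w^{*}$ lying off the tie-broken shortest path, an edge case the paper's own argument also leaves unaddressed.
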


\begin{proof}
    As a part of the proof of Lemma \ref{lemma: IC and IR}, all agents refer truthfully ($r_{i}^{\prime}=r_{i}$). 
    
    Note that based on the rule of TRDM, the winner has the highest valuation in the network. Hence, $max(v_{w}, v_{D_{w}^{V}}^{*}) = v_{w} \geq v_{D_{w}^{V}}^{*}$ always holds.

    \textbf{We begin with agent $w$ with the highest valuation in the sub-network rooted at him.} 
    
    \textbf{(Case 1)} If he agent $w$ not the winner, then he is also not in the shortest path to the winner $\Gamma_{w}$. For such an agent, whatever he reports, he is never rewarded and his utility is always $0$.
    
    \textbf{(Case 2)} If agent $w$ is the winner, according to the rule of TRDM, the winner pays the highest value in the network without the winner and his descendants ($v_{-w}^{*}$). 
    
    If he bids honestly ($v_{w}^{\prime}=v_{w}$), he wins and pays $v_{-w}^{*}$. If he overbids ($v_{w}^{\prime}>v_{w}$), he still wins and pays the same amount $v_{-w}^{*}$. 
    
    If he underbids ($v_{w}^{\prime}<v_{w}$), there exists three results,
    \begin{enumerate}
        \item he still wins and pays $v_{-w}^{*}$,
        \item he loses and the new winner $w^{\prime}$ is his child ($w^{\prime} \in r_{w}$),
        \item he loses and not in the new shortest path of the new winner $\Gamma_{w^{\prime}}$.
    \end{enumerate}
    
    For Result (2), he receives a reward from the seller, and new utility is $v_{-w^{\prime}}^{*}-v_{-w}^{*}$. Moreover, as $v_{w}^{\prime} < v_{-w^{\prime}}^{*} \leq v_{w}$, underbidding leads to a lower utility than reporting truthfully, where the utility is $v_{w}-v_{-w}^{*}$.
    
    For Result (3), the problem goes back to Case 1, and his utility also decreases.
    
    Therefore, for agents with the highest valuation in the sub-networks rooted at them, the optimal strategy is to report their valuation truthfully.

    \textbf{Now, we consider all other agents who do not have the largest valuation in the sub-network rooted at them.} Recall that agents are only rewarded if they are in the shortest path to the winner $\Gamma_{w}$. Moreover, agents know the highest valuation among their descendants set $v_{D_{i}^{V}}^{*}$.

    Let $v_{i} < v_{D_{i}^{V}}^{*}$ such that one of agent $i$'s descendants (agent $w$) has the highest valuation in the network rooted at agent $i$. Then, agent $i$ knows he cannot be the winner of the item, and agent $w$ may become the winner. (Recall the rule of TRDM, the agent is rewarded based on the highest value in the sub-network without his child in $\Gamma_{w}$.)
    
    \textbf{(Case 1)} If agent $w$ becomes the winner, agent $i$ needs to bid the second-highest value in the network to maximize his reward. Moreover, the closer his bid is to the winner's bid, the higher his utility will be.
    
    Thus, such agents bid $v_{i}^{\prime}=v_{D_{i}^{V}}^{*}-\delta$, where $\delta$ is an arbitrary small positive number. Consequently, their utility is $v_{D_{i}^{V}}^{*}-\delta-v_{-i}^{*}$. Since $\delta$ is a small positive number, there always exists a non-negative $0<\epsilon<\delta$ such that the utility of bid $v_{D_{i}^{V}}^{*}-\epsilon$ is greater than $v_{D_{i}^{V}}^{*}-\delta$. ($\epsilon$-approximate Nash equilibrium)
    
    \textbf{(Case 2)} If agent $w$ is not the winner, then agent $i$ is not in the path $\Gamma_{w}$. As a result, whatever agent $i$ reports, agent $i$ is not rewarded.

\end{proof}

We refer readers to page $45$ of the book \textit{Algorithmic Game Theory} \cite{nisan2007algorithmic} for details on $\epsilon$-approximate Nash equilibrium.

Proposition \ref{prop: equlibrium} states that agents bid the highest valuation in the sub-network rooted at them. However, overbidding does not affect the seller's revenue. The seller's revenue is determined by the highest valuation in the sub-network without $\Gamma_{w}$. Even if some agents in the sub-network overbid, they will never bid higher than the highest valuation in the corresponding network. As a result, the seller's revenue is not affected by misreporting.

Furthermore, according to Proposition \ref{prop: equlibrium}, we can easily derive that agent who is both the child of the seller and the ancestor of the winner ($i=r_{s}^{*}$) is rewarded the most.

\begin{corollary}
\label{coro: max reward}
    Under TRDM, if all agents bid optimally, the agent $i \in r_{s}^{*}$ has the highest utility, where $r_{s}^{*}$ is the children set of the seller $s$ and also located in the winning path $\Gamma_{w}$.
\end{corollary}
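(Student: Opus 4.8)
The plan is to write down every agent's utility at the equilibrium bid profile of Proposition~\ref{prop: equlibrium} and to check that the seller's child on the winning path collects the largest of them. First I would invoke the payment rule of Definition~\ref{algo: truthful referral mechanism} together with Lemma~\ref{lemma: IC and IR}: an agent who is neither the winner $w$ nor an internal node of the shortest path $\Gamma_w$ has payment $0$ and hence utility $0$, so it suffices to compare $w$ with the agents of $\Gamma_w \setminus \{s,w\}$. Write $\Gamma_w = \{s = a_0, a_1, \ldots, a_k = w\}$, so that $a_1 = r_s^*$ and $r_{a_t}^* = a_{t+1}$ for $1 \le t \le k-1$.

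Second, I would pin down the bids that matter. Because $w$ carries the globally highest valuation, at the equilibrium $w$ bids $v_w$ truthfully, while every other agent $j$ bids $\max(v_j,\, v_{D_j^V}^{*} - \delta) < v_w$; in particular $w \in D_{a_t}^V$ for each internal path node, so $b_{a_t} = \max(v_{a_t},\, v_w - \delta) \in [v_w - \delta,\, v_w)$. From the nesting $V_{-a_1} \subseteq V_{-a_2} \subseteq \cdots \subseteq V_{-a_k} = V_{-w}$ (deleting a node also deletes its descendants, so each successive set removes fewer vertices) the numbers $v_{V_{-a_t}}^{*}$ are nondecreasing in $t$, with $v_{V_{-a_1}}^{*} = Rev^{TRDM}(\theta')$ and $v_{V_{-a_k}}^{*} = v_{V_{-w}}^{*} = p_w(\theta')$. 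The crucial point is that for every $t \ge 2$ the ancestor $a_{t-1}$ still belongs to $V_{-a_t}$ and bids at least $v_w - \delta$, whereas no agent of $V_{-a_t} \subseteq V_{-w}$ bids as high as $v_w$; hence $v_{V_{-a_t}}^{*} \in [v_w - \delta,\, v_w)$ for all $t \ge 2$.

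Third, I would read off the utilities. For $2 \le t \le k-1$ we get $u_{a_t} = -p_{a_t}(\theta') = v_{V_{-a_{t+1}}}^{*} - v_{V_{-a_t}}^{*} \in [0,\delta)$, since both terms sit in the length-$\delta$ window $[v_w - \delta,\, v_w)$; likewise $u_w = v_w - v_{V_{-w}}^{*} \in (0,\delta]$. For $a_1 = r_s^*$, by contrast, $u_{a_1} = v_{V_{-a_2}}^{*} - v_{V_{-a_1}}^{*} = v_{V_{-a_2}}^{*} - Rev^{TRDM}(\theta') \ge (v_w - \delta) - Rev^{TRDM}(\theta')$, because $a_1 \in V_{-a_2}$ itself bids at least $v_w - \delta$. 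Equivalently, the surplus handed to all non-seller agents telescopes to $v_w - Rev^{TRDM}(\theta')$, of which every share except $a_1$'s is at most $\delta$, so $a_1$ retains all but a remainder of order $\delta$ and therefore has the highest utility; letting $\delta$ (and the approximation parameter $\epsilon$) shrink gives the claim.

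The step I expect to be the real obstacle is precisely this last comparison: showing that the only substantial jump in the nondecreasing chain $Rev^{TRDM}(\theta') = v_{V_{-a_1}}^{*} \le \cdots \le v_{V_{-a_k}}^{*} < v_w$ occurs at the very first step. This is where the shortest-path structure is used --- deleting any $a_t$ with $t \ge 2$ leaves its ancestor $a_{t-1}$, whose equilibrium bid is within $\delta$ of $v_w$, so the maximum can drop by at most a $\delta$-order amount; the maximum can fall substantially only once $a_1 = r_s^*$ is deleted, since that detaches the branch of the network carrying the top bids. The finicky part is verifying that the borderline configurations in which $Rev^{TRDM}(\theta')$ climbs to within order $\delta$ of $v_w$ --- which require an alternative, non-shortest route to $w$ from outside $r_s^*$'s subtree --- are exactly the ones absorbed by the $\epsilon$-slack of the approximate equilibrium, together with the usual tie-breaking degeneracies at the top of the valuation order entering in the second step.
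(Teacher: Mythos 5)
The paper offers no argument for this corollary at all: it is asserted as something one can ``easily derive'' from Proposition~\ref{prop: equlibrium}, so there is no official proof to match yours against. Your telescoping argument is the natural way to fill that gap, and it is essentially sound: the nesting $V_{-a_1}\subseteq\cdots\subseteq V_{-a_k}$ is correct (each $a_{t+1}$ and its descendants form a subset of $a_t$ and its descendants), the rewards do telescope to $v_w - Rev^{TRDM}(\theta')$, and the observation that for $t\ge 2$ the surviving ancestor $a_{t-1}\in V_{-a_t}$ bids at least $v_w-\delta$ while no one in $V_{-a_t}$ bids $v_w$ correctly confines every share except $a_1$'s (and the winner's) to a window of width $\delta$. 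The one point you flag as ``finicky'' is in fact a genuine limitation of the statement, not just of your proof: if some agent outside $r_s^*$'s subtree also has $w$ (or an agent with valuation within $O(\delta)$ of $v_w$) among its reachable descendants --- e.g.\ two equally short routes from $s$ to $w$ through different seller-children --- then $Rev^{TRDM}(\theta')=v^*_{V_{-a_1}}\ge v_w-\delta$, the entire surplus $v_w-Rev$ is of order $\delta$, and the winner's utility $v_w-v^*_{V_{-w}}$ can strictly exceed $r_s^*$'s, so the corollary holds only up to the $\epsilon$-slack of the approximate equilibrium rather than exactly. You should state this explicitly as a hypothesis (no agent outside $r_s^*$'s subtree reaches $w$ or a near-maximal valuation) or as an ``up to $\epsilon$'' qualification; the paper itself silently ignores this case. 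With that caveat made explicit, your proof is correct and strictly more informative than what the paper provides.
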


In what follows, we analyze the allocation and revenue outcomes of TRDM. We demonstrate that TRDM allocates the item to the agent with the highest valuation and guarantees the seller a non-negative revenue.

\begin{lemma}
\label{lemma: efficient and budget feasible}
    TRDM is efficient and budget balanced.
\end{lemma}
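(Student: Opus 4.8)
The plan is to treat the two properties separately: efficiency is essentially read off from the allocation rule, while budget balance reduces to a telescoping identity along the winning path $\Gamma_{w}$. For efficiency, I would argue directly: TRDM awards the item to the unique winner $w\in\arg\max_{i\in V}v_{i}^{\prime}$ (ties broken deterministically), so $SW^{TRDM}(\theta^{\prime})=x_{w}(\theta^{\prime})v_{w}^{\prime}=v_{w}^{\prime}=\max_{i\in V}v_{i}^{\prime}$, which is exactly Definition \ref{def: efficient}. To obtain efficiency with respect to the \emph{true} valuations as well, I would invoke Proposition \ref{prop: equlibrium}: the agent $w^{*}$ with the globally highest true valuation satisfies $v_{D_{w^{*}}^{V}}^{*}\le v_{w^{*}}$, so the equilibrium bid function gives $v_{w^{*}}^{\prime}=v_{w^{*}}$, whereas every other agent $i$ bids $v_{i}^{\prime}=\max(v_{i},v_{D_{i}^{V}}^{*}-\delta)\le\max(v_{i},v_{w^{*}}-\delta)<v_{w^{*}}$; hence $w^{*}$ is the winner and the outcome maximizes social welfare.

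For budget balance the core step is to evaluate $Rev^{TRDM}(\theta^{\prime})=\sum_{i\in N}p_{i}(\theta^{\prime})$ from the payment rule and check it matches Step 5. Only the winner and the interior nodes of $\Gamma_{w}$ have nonzero payments. Writing $\Gamma_{w}=\{s=a_{0},a_{1},\dots,a_{k}=w\}$ with $a_{1}=r_{s}^{*}$ and $r_{a_{j}}^{*}=a_{j+1}$ for $1\le j\le k-1$, I would substitute $p_{a_{j}}=v_{V_{-a_{j}}}^{*}-v_{V_{-a_{j+1}}}^{*}$ and $p_{w}=v_{V_{-w}}^{*}=v_{V_{-a_{k}}}^{*}$, and observe that the sum telescopes:
\[
\sum_{i\in N}p_{i}(\theta^{\prime})=v_{V_{-a_{k}}}^{*}+\sum_{j=1}^{k-1}\bigl(v_{V_{-a_{j}}}^{*}-v_{V_{-a_{j+1}}}^{*}\bigr)=v_{V_{-a_{1}}}^{*}=v_{V_{-r_{s}^{*}}}^{*}.
\]
This confirms that the revenue announced in Step 5 coincides with the $\sum_{i}p_{i}$ of Definition \ref{def: budget-balance}; since $v_{V_{-r_{s}^{*}}}^{*}$ is a reported valuation in $\mathbf{R}_{+}$ (or $0$ when $V_{-r_{s}^{*}}$ contains no agent besides $s$), we get $Rev^{TRDM}(\theta^{\prime})\ge 0$ for every $\theta^{\prime}\in\Theta$.

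As a sanity check used implicitly above, I would record that each interior payment is a genuine reward: because $r_{i}^{*}$ is a child of $i$, we have $\{r_{i}^{*}\}\cup D_{r_{i}^{*}}^{V}\subseteq\{i\}\cup D_{i}^{V}$, hence $V_{-i}\subseteq V_{-r_{i}^{*}}$ and $p_{i}=v_{V_{-i}}^{*}-v_{V_{-r_{i}^{*}}}^{*}\le 0$. The main obstacle is bookkeeping rather than depth: fixing the indexing of $\Gamma_{w}$ and the nested sets $V_{-a_{j}}$ so the telescoping is rigorously valid, and dispatching the degenerate cases --- the winner being a direct child of the seller (so $\Gamma_{w}\setminus\{s,w\}=\emptyset$ and $Rev=p_{w}=v_{V_{-w}}^{*}$) and $V_{-r_{s}^{*}}$ containing no bidder (so the revenue is $0$). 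A secondary point worth stating explicitly is which notion of efficiency is intended: the reported-welfare version is immediate from the allocation rule, while the true-welfare version relies on the equilibrium of Proposition \ref{prop: equlibrium} together with Assumption \ref{assumption: kind-item_seeker}.
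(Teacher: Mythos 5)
Your proof is correct and follows essentially the same route as the paper: efficiency is read off the allocation rule, and budget balance follows from the revenue being $v_{V_{-r_{s}^{*}}}^{*}\geq 0$. The two points you add --- the telescoping check that $\sum_{i}p_{i}(\theta^{\prime})$ actually equals the revenue declared in Step 5, and the distinction between reported-welfare and true-welfare efficiency --- are details the paper's own two-line proof simply asserts without verification, so your version is strictly more complete rather than different in approach.
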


\begin{proof}
    (\textbf{Efficient}) According to the allocation rule of TRDM, the item is allocated to the bidder with the highest valuation. Hence, TRDM is efficient. (see Definition \ref{def: efficient})

    (\textbf{Budget balance}) As mentioned in Definition \ref{algo: truthful referral mechanism}, the seller's revenue under TRDM is fixed and $Rev=v_{V_{-r_{s}^{*}}}^{*} \geq 0$. Hence, TRDM is budget-balanced. (see Definition \ref{def: budget-balance})
\end{proof}

TRDM satisfies all the desirable properties outlined in Section \ref{sec: preliminaries and model}, except IC. However, partially IC (non-truthful bid) is not a significant concern in TRDM, as the revenue is fixed and not influenced by overbidding.

\section{Comparison}
\label{sec: comparison}

It is important to note that the network and valuations used in this analysis are randomly generated for the purpose of comparison. Our findings and conclusions hold for any network, regardless of its specific structure or valuations. 

To provide an intuitive comparison of TRDM with other existing diffusion mechanisms, we present a running example in Figure \ref{fig: compare example} and provide numerical results in Table \ref{tab: comparison}. For consistency, we assume all agents report truthfully in all mechanisms.

In this section, we compare the performance of existing diffusion mechanisms such as VCG, IDM \cite{li2017mechanism}, CDM \cite{li2019diffusion}, FDM \cite{zhang2020incentivize}, NRM \cite{zhang2020redistribution}, and TRDM.

\begin{figure}[h!]
    \centering
    \includegraphics[width=0.35\textwidth]{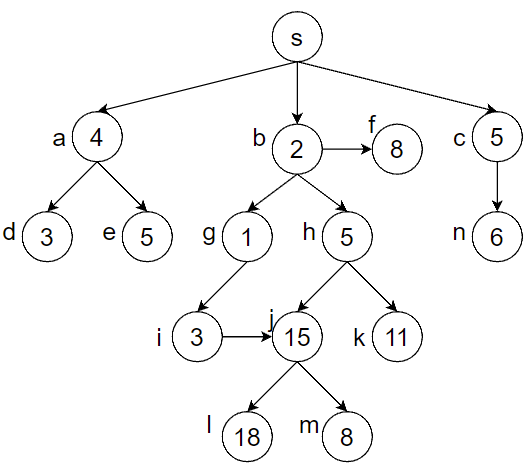}
    \caption{Network for comparison}
    \label{fig: compare example}
\end{figure}

\begin{table}[h!]
\centering
\resizebox{0.5\textwidth}{!}{
\begin{tabular}{|c|c|c|c|c|}
\hline
\multirow{2}{*}{Mechanism} & \multirow{2}{*}{Winner} & \multirow{2}{*}{Rewarded agents} & \multirow{2}{*}{Social Welfare} & \multirow{2}{*}{Revenue} \\
 &  &  &  &  \\ \hline
\multirow{2}{*}{\textbf{VCG}} & \multirow{2}{*}{$l(15)$} & \multirow{2}{*}{$b(-12), j(-7)$} & \multirow{2}{*}{18} & \multirow{2}{*}{-4} \\
 &  &  &  &  \\ \hline
\multirow{2}{*}{\textbf{IDM}} & \multirow{2}{*}{$j(11)$} & \multirow{2}{*}{$b(-2), h(-3)$} & \multirow{2}{*}{15} & \multirow{2}{*}{6} \\
 &  &  &  &  \\ \hline
\multirow{2}{*}{\textbf{CDM}} & \multirow{2}{*}{$j(11)$} & \multirow{2}{*}{$b(-2)$} & \multirow{2}{*}{15} & \multirow{2}{*}{9} \\
 &  &  &  &  \\ \hline
\multirow{2}{*}{\textbf{FDM}} & \multirow{2}{*}{$j(\frac{43}{4})$} & \multirow{2}{*}{$b(-2), g(-\frac{3}{4}), h(-\frac{1}{4}), i(-\frac{3}{4})$} & \multirow{2}{*}{15} & \multirow{2}{*}{$7$} \\
 &  &  &  &  \\ \hline
\multirow{4}{*}{\textbf{NRM}} & \multirow{4}{*}{$j(11)$} & \multirow{2}{*}{$a(-\frac{18}{14}), b(-\frac{45}{14}), c(-\frac{10}{14}),$} & \multirow{4}{*}{15} & \multirow{4}{*}{$\frac{39}{14}$} \\
 &  &  &  &  \\
 &  & \multirow{2}{*}{$h(-2), k(-1)$} &  &  \\
 &  &  &  &  \\ \hline
\multirow{2}{*}{\textbf{TRDM}} & \multirow{2}{*}{$l(15)$} & \multirow{2}{*}{$b(-2), h(-3), j(-4)$} & \multirow{2}{*}{18} & \multirow{2}{*}{$6$} \\
 &  &  &  &  \\ \hline
\end{tabular}
}
\caption{Empirical results for different diffusion mechanisms in Figure \ref{fig: compare example}. Numbers in parentheses represent the corresponding payments. Positive for payment and negative for reward.}
\label{tab: comparison}
\end{table}

The main improvement of TRDM is it ensures all agents refer truthfully, which is one of the most important properties of the diffusion mechanism. Both Sybil attacks and collusion groups can negatively impact the seller's revenue. 

For example, under \textbf{IDM}, agent $l$ creates a fake account $l^{*}$ with a bid between $(15, 18]$, and invite himself. Then, his fake account $l^{*}$ becomes the winner and pays $15$ to purchase the item. For agent $l$, he obtains the item, and the corresponding utility becomes from $0$ to $3$ by applying Sybil attacks. 

Under \textbf{CDM}, assume agent $f$ is a fake account created by agent $b$. If agent $f$ bids $11$, his reward increases from $2$ to $11-6=5$.. Furthermore, agent $b$ can also form a collusion group with his descendants ($h$ and $j$) and bid any value higher than $6$. Then, he purchases the item at a price of $6$ and resells it to agent $l$ at a price of $15$. Finally, agents $b$, $h$, and $f$ redistribute the surplus of $15-6=9$. For both methods, the revenue decreases from $9$ to $6$.

Under \textbf{FDM}, the explanation is the same as that of CDM, when agent $b$ creates a fake account agent $f$ with a bid $11$. 

Under \textbf{NRM}, agent $a$ is rewarded $\frac{3*6}{14}=\frac{18}{14}$ if he reports truthfully. ($3$ is the number of his descendants and himself, $14$ is the number of agents in the system, and $6$ is the highest valuation without the network started from agent $b$, who is the ancestor of the winner.) If he creates \textbf{two} other fake ids with any bidding value less than $6$, he is rewarded $\frac{5*6}{16}=\frac{30}{16}>\frac{18}{14}$. (more descendants, higher reward) Hence, it is not Sybil-proof.

Furthermore, the above examples are consistent with Theorem \ref{thm: upper bound} that the seller's revenue is upper bounded; otherwise, agents may apply Sybil bids or collusion groups to improve their utility.

Although IDM is not Sybil-proof, Sybil bids do not affect the seller's revenue, as the revenue is fixed (see \cite{li2017mechanism}). For all other mechanisms, the more Sybil bids are applied, the lower the revenue. Nevertheless, IDM and TRDM are the only two mechanisms in which Sybil bids have no impact (reduction) on the seller's revenue. In addition, TRDM rewards more than IDM,

\begin{proposition}
\label{prop: compare reward}
    The revenue generated by TRDM is equal to that of IDM, while the total reward of TRDM is \textbf{not} lower than that of IDM.
\end{proposition}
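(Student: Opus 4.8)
The plan is to reduce everything to closed forms for the revenue and the \emph{total reward} (the sum of the magnitudes of the negative payments) of each mechanism, and then compare them term by term. For TRDM these are explicit from Definition~\ref{algo: truthful referral mechanism}: writing the winning path as $\Gamma_w=(s,a_1,\dots,a_{L-1},w)$ with $a_{j+1}=r_{a_j}^*$ and $a_1=r_s^*$, the winner $w$ --- the global highest bidder, by Lemma~\ref{lemma: efficient and budget feasible} --- pays $v_{V_{-w}}^*$, each internal node $a_j$ receives $v_{V_{-a_j}}^*-v_{V_{-a_{j+1}}}^*\le 0$ (the sign because $\{a_{j+1}\}\cup D_{a_{j+1}}^V\subseteq D_{a_j}^V$), and $Rev^{TRDM}(\theta')=v_{V_{-a_1}}^*$. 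The path rewards telescope to $v_{V_{-a_1}}^*-v_{V_{-w}}^*$, so the TRDM total reward equals $v_{V_{-w}}^*-v_{V_{-a_1}}^*$, i.e.\ exactly $(\text{winner's payment})-(\text{revenue})$. First I would record the analogue for IDM: its critical sequence $d_1,\dots,d_k$ for the highest bidder is an ordered chain of cut vertices ending at the highest bidder $d_k$; writing $N_{d_i}$ for the set of agents all of whose $s$-paths traverse $d_i$, and $\rho_i:=\max_{j\notin N_{d_i}}v_j$, one has $\rho_1\le\rho_2\le\cdots$, the winner is the first $d_i$ for which $v_{d_i}\ge\rho_i$ (call its index $i^*$), it pays $\rho_{i^*}$, and each earlier $d_i$ receives $\rho_i-\rho_{i+1}\le 0$. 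These too telescope, giving $Rev^{IDM}(\theta')=\rho_1$ and IDM total reward $=\rho_{i^*}-\rho_1=(\text{winner's payment})-(\text{revenue})$.

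Next I would prove the revenue identity $\rho_1=v_{V_{-a_1}}^*$, i.e.\ that the highest reported valuation outside $N_{d_1}$ equals the highest reported valuation outside the sub-network $\{a_1\}\cup D_{a_1}^V$ rooted at the seller's child $a_1=r_s^*$ on the path to $w$. Both mechanisms pin their reference agent $w$ to the global highest bidder, so $d_1$ and $a_1$ point at the same target. One inclusion is immediate: every $s$--$w$ path runs through $d_1$, and the shortest path $\Gamma_w$ is such a path on which $d_1$ appears no earlier than $a_1$, so $d_1$ is reachable from $a_1$ and hence $N_{d_1}\subseteq\{a_1\}\cup D_{a_1}^V$; this already yields $\rho_1\ge v_{V_{-a_1}}^*$. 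The reverse inequality, $\rho_1\le v_{V_{-a_1}}^*$, is the delicate point and, I expect, the main obstacle: one must argue that an agent realising $\max_{j\notin N_{d_1}}v_j$ can be taken outside $\{a_1\}\cup D_{a_1}^V$ as well. This is transparent whenever every node of $\Gamma_w$ is a cut vertex for $w$ --- in particular when the reported network is a tree --- since then $d_1=a_1$ and the two excluded regions literally coincide; in the general case it leans on the step~2--3 tie-breaking of TRDM, which among several shortest paths prefers the branch carrying the larger ``outside'' valuation, together with $d_1$ being the \emph{first} critical node, and carrying out this bookkeeping precisely is the part that needs real care.

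Finally, with the revenue identity in hand the reward comparison collapses to a single inequality. Because each mechanism has exactly one payer, the identity ``total reward $=(\text{winner's payment})-(\text{revenue})$'' from the first step reduces the claim to $v_{V_{-w}}^*\ge\rho_{i^*}$, i.e.\ the TRDM winner pays at least as much as the IDM winner. The TRDM winner is the global highest bidder $w$, while the IDM winner $d_{i^*}$ is a cut vertex for $w$ and hence an ancestor of $w$, so $\{w\}\cup D_w^V\subseteq N_{d_{i^*}}$ (again obvious on trees, and otherwise handled by the same structural argument as above), whence $\rho_{i^*}=\max_{j\notin N_{d_{i^*}}}v_j\le\max_{j\notin\{w\}\cup D_w^V}v_j=v_{V_{-w}}^*$. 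Combining with $\rho_1=v_{V_{-a_1}}^*$ gives $v_{V_{-w}}^*-v_{V_{-a_1}}^*\ge\rho_{i^*}-\rho_1$, i.e.\ the TRDM total reward is at least the IDM total reward; together with the revenue identity this is the proposition. To summarise, the telescoping computations and the ``reward $=$ winner's payment $-$ revenue'' observation are routine, and the effort concentrates on the structural identification of $N_{d_1}$ with $\{r_s^*\}\cup D_{r_s^*}^V$ (and of $N_{d_{i^*}}$ containing $\{w\}\cup D_w^V$).
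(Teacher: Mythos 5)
Your proposal follows essentially the same route as the paper's proof: both rest on the identity (total reward) $=$ (winner's payment) $-$ (revenue) coming from the fact that only the winner pays, the observation that the IDM winner is an ancestor of the global highest bidder so its payment $v^*_{V_{-w_{IDM}}}$ is at most the TRDM winner's payment $v^*_{V_{-w_{TRDM}}}$, and the revenue identity $Rev^{IDM}=Rev^{TRDM}$. The one step you flag as delicate --- identifying IDM's first critical node's dominated set $N_{d_1}$ with $\{r_s^*\}\cup D_{r_s^*}^V$ on non-tree networks --- is precisely the step the paper asserts without proof (it simply writes $Rev^{IDM}=v^*_{V_{-\{\Gamma_w\setminus s\}}}=Rev^{TRDM}$), so your treatment is, if anything, more careful than the original.
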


Due to space constraints, proofs are provided in Appendix.

The numerical experiment in Table \ref{tab: comparison} shows that TRDM rewards the most compared with other mechanisms, TRDM ($9$) outperforms IDM ($5$), CDM ($2$), FDM ($\frac{15}{4} = 3.75$) and NRM ($\frac{115}{14} \approx 8.2$) in overall reward. As we discussed, under TRDM, some agents overbid, which also increases their total reward without influencing the seller's revenue. 

Moreover, TRDM outperforms any traditional auction without referrals on revenue. This is consistent with the Bulow-Klemperer Theorem, which suggests that increasing competition is often more effective than refining the auction format. It is worth noting that, in network settings, the VCG mechanism is the only traditional auction that incentivizes agents to invite others, even though it may not always be budget balanced.

\begin{proposition}
\label{prop: compare revenue}
    The revenue generated by TRDM is \textbf{not} lower than that of the VCG mechanism, and it reaches the upper bound defined in Theorem \ref{thm: upper bound}.
\end{proposition}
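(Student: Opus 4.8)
The statement has two parts, which I would prove separately: (i) $Rev^{TRDM}$ attains the upper bound of Theorem~\ref{thm: upper bound}, and (ii) $Rev^{TRDM}\ge Rev^{VCG}$. Part (i) is essentially immediate from the construction. By step~5 of Definition~\ref{algo: truthful referral mechanism}, $Rev^{TRDM}(\theta')=v^{*}_{V_{-r_s^{*}}}$, i.e.\ the highest reported valuation after deleting the seller's child $r_s^{*}$ on $\Gamma_w$ together with its reachable descendants $D^{V}_{r_s^{*}}$. This is precisely the quantity appearing in Theorem~\ref{thm: upper bound} (the maximum valuation in the sub-network after excluding the group rooted at the seller's child who is an ancestor of the winner). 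Since TRDM is truthful referral (Lemma~\ref{lemma: IC and IR}), Theorem~\ref{thm: upper bound} says this quantity is an upper bound on the revenue of any such mechanism, and TRDM meets it with equality. The only subtlety is that $r_s^{*}$ and $\Gamma_w$ are well defined, which is handled by the tie-breaking rules in steps~2--3.

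\textbf{VCG side.} For part (ii) I would first record that the VCG diffusion mechanism is efficient, so its winner is the same highest-valuation agent $w$ as under TRDM, and that the winner's VCG payment is the highest valuation remaining once $w$ and all of its reachable descendants are deleted, namely $v^{*}_{V_{-w}}$ --- which is exactly the winner's payment $p_w$ under TRDM. Next I would argue that the only other agents with a nonzero VCG transfer are the ancestors of $w$ whose removal disconnects $w$ from $s$ (for tree-structured reports this is all of $\Gamma_w\setminus\{s,w\}$): for such an agent $i$, deleting $i$ leaves best valuation $v^{*}_{V_{-i}}$ while keeping $i$ leaves the winner contributing $v_w$, so $i$ is rewarded $v_w-v^{*}_{V_{-i}}>0$. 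Hence
\begin{equation*}
    Rev^{VCG}(\theta') = v^{*}_{V_{-w}} - \sum_{i\in\Gamma_w\setminus\{s,w\}}\bigl(v_w - v^{*}_{V_{-i}}\bigr).
\end{equation*}

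\textbf{Comparison.} I would then put $Rev^{TRDM}$ in the same shape. Writing $\Gamma_w=\{s=g_0,g_1=r_s^{*},\dots,g_{m+1}=w\}$, step~6 gives $p_{g_t}=v^{*}_{V_{-g_t}}-v^{*}_{V_{-g_{t+1}}}$ for $1\le t\le m$, so together with $p_w=v^{*}_{V_{-w}}$ the payments telescope to the expected $Rev^{TRDM}(\theta')=v^{*}_{V_{-r_s^{*}}}$, equivalently
\begin{equation*}
    Rev^{TRDM}(\theta') = v^{*}_{V_{-w}} - \sum_{i\in\Gamma_w\setminus\{s,w\}}\bigl(v^{*}_{V_{-r_i^{*}}} - v^{*}_{V_{-i}}\bigr).
\end{equation*}
Subtracting, the $v^{*}_{V_{-i}}$ terms cancel and $Rev^{TRDM}(\theta')-Rev^{VCG}(\theta')=\sum_{i\in\Gamma_w\setminus\{s,w\}}\bigl(v_w-v^{*}_{V_{-r_i^{*}}}\bigr)$, which is non-negative (and strictly positive when $\Gamma_w$ has an intermediate node) because $v_w=v^{*}_{V}$ is the global maximum valuation, so $v_w\ge v^{*}_{V_{-r_i^{*}}}$ for every $i$. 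This simultaneously re-derives part (i) and exhibits the gap: VCG over-rewards each ancestor by $v_w-v^{*}_{V_{-r_i^{*}}}$ relative to TRDM.

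\textbf{Main obstacle.} The delicate step is the precise characterization of VCG payments over a network: showing that only the winner and the ancestors lying on every $s$-to-$w$ path carry a nonzero externality, and that the winner's externality is exactly $v^{*}_{V_{-w}}$. This is clean for tree-structured referral graphs, but for a general DAG one must justify that the correct counterfactual is the removal of an agent together with all of its reachable descendants; I would either restrict to tree reports (as is standard in this line of work) or verify the DAG case directly. Everything else --- the telescoping identity and the single inequality $v_w\ge v^{*}_{V_{-r_i^{*}}}$ --- is routine bookkeeping.
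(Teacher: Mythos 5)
Your proposal is correct and follows essentially the same route as the paper: both characterize the network VCG transfers as $p_w=v^{*}_{V_{-w}}$ for the winner and $v^{*}_{V_{-i}}-v_w$ for each ancestor on $\Gamma_w$, compare the resulting sum to $Rev^{TRDM}=v^{*}_{V_{-r_s^{*}}}$, and read the upper-bound claim directly off step~5 of Definition~\ref{algo: truthful referral mechanism}; your explicit telescoping of the TRDM payments actually supplies the algebra the paper leaves as a bare inequality. The only piece you omit is the paper's separate (and easy) comparison against VCG \emph{without} referrals, i.e.\ the second-price auction restricted to the seller's children, which the paper handles via $\{r_s\setminus w_{SPA}\}\subseteq V_{-\Gamma_{w_{TRDM}}}$.
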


\section{Conclusion}
\label{sec: conclusion}

This paper explores auctions over networks and proposes the Truthful Referral Diffusion Mechanism (TRDM). TRDM guarantees that all agents benefit from participating in the auction and some agents are rewarded for contributions to the improved revenue. Moreover, TRDM ensures that the social welfare is maximized and the revenue generated attains the upper bound of the maximum revenue achievable by a diffusion mechanism, which is also higher compared to auctions without referrals. The key difference of TRDM from prior works is that it emphasizes the importance of truthful referral rather than truthful bidding, which results in an efficient allocation. Although some agents may overbid under TRDM, the revenue remains unaffected, and the total reward improves. As a result, TRDM provides attractive returns and incentives for both sellers and buyers.

Our study offers valuable insights into mechanism design in network settings. However, it would be intriguing to extend the model to consider multiple items.

% References
\clearpage
\bibliography{reference}

\clearpage
\section{Proof of Proposition \ref{prop: compare reward}}
\begin{proof}
    The revenue of IDM is defined as
    \begin{equation*}
        \begin{aligned}
            Rev^{IDM}=v_{V_{-\{\Gamma_{w} \setminus s \}}}^{*}=Rev^{TRDM}.
        \end{aligned}
    \end{equation*}

    Based on the rule of IDM, the item is allocated to the agent who invites the agent with the highest valuation in the social network. Mathematically, $V_{-w_{IDM}} \in V_{-w_{TRDM}}$ implies $v_{V_{-w_{IDM}}}^{*} \leq v_{V_{-w_{TRDM}}}^{*}$.

    Therefore, the total reward of IDM is
    \begin{equation*}
        \begin{aligned}
            Reward^{IDM}&=p_{w_{IDM}}-Rev^{IDM}\\
            &= v_{V_{-w_{IDM}}}^{*} - Rev^{IDM}\\
            & \leq v_{V_{-w_{TRDM}}}^{*} - Rev^{IDM}\\
            &= Reward^{TRDM}
        \end{aligned}
    \end{equation*}
    
   Hence, the overall reward of TRDM is no less than that of IDM.
\end{proof}

\section{Proof of Proposition \ref{prop: compare revenue}}

\begin{proof}
    For the VCG mechanism, we first consider the case when agents do not refer others. For a single-item case, VCG is equivalent to Second Price Auction (SPA). The winner $w_{SPA}$ pays the second highest value. Since $\{r_{s} \setminus w_{SPA}\} \in V_{-\Gamma_{w_{TRDM}}}$, we have $v_{r_{s} \setminus w_{SPA}}^{*} \leq v_{V_{-\Gamma_{w_{TRDM}}}}^{*}$. Intuitively, the second highest value among the social network without agents in the path $\Gamma_{w_{TRDM}}$ is no less than the second highest value among seller's children.
    
    Hence, the revenue of TRDM is no less than the VCG mechanism without referrals.
    
    Secondly, we consider the case when agents refer others. For agents $i \notin \Gamma_{w}$, their payments are always 0. For agents $i \in \Gamma_{w}$, they are rewarded by the seller based on the VCG mechanism. Therefore, the revenue of the VCG mechanism in the social network is

    \begin{equation*}
        \begin{aligned}
            Rev^{VCG} &= \sum_{i \in \Gamma_{w}} p_{i} \\
            &= \sum_{i \in \Gamma_{w} \setminus w} (v_{V_{-i}}^{*}-v_{w})+(v_{V_{-w}}^{*}-0)\\
            &\leq v_{V_{-r_{s}^{*}}}^{*}=Rev^{TRDM} 
        \end{aligned}
    \end{equation*}
    
    Consequently, the revenue of TRDM is no less than that of the VCG mechanism with referrals.

    By step 5 of Definition \ref{algo: truthful referral mechanism}, the revenue of TRDM achieves the upper bound defined in Theorem \ref{thm: upper bound}. 
\end{proof}

\end{document}